\definecolor{rltblue}{rgb}{0,0,0.4}
\definecolor{drkgreen}{rgb}{0,0.4,0}
\newtheorem{thm}{Theorem}
\newtheorem{lemma}[thm]{Lemma}
\newtheorem*{lemma*}{Lemma}
\theoremstyle{definition}
\newtheorem{definition}[thm]{Definition}
\theoremstyle{remark}
\newtheorem{historic}[thm]{Historic Remark}
\theoremstyle{plain}
\newcounter{contenumi}
\def\and{\mathrel{\&}}
\def\ZZ{{\mathbb Z}}
\def\CC{{\mathbb C}}
\def\DD{\mathbb D}
\def\EE{{\mathbb E}}
\def\VV{{\mathbb V}}
\def\FF{\mathbb F}
\def\MM{{\mathbb M}}
\def\RR{{\mathbb R}}
\begin{document}
	
	\title{The spark of synchronization in heterogeneous networks of chaotic maps}
	
	\author{Antonio Montalb\'an}
	\affiliation{Department of Mathematics, University of California, Berkeley, Berkeley 94720, USA} 
	\author{Rodrigo M. Corder}
	\email{rodrigo.corder@usp.br}
	\affiliation{Divisions of Epidemiology and Biostatistics, School of Public Health, University of California, Berkeley, Berkeley 94720, USA}
	\affiliation{Department of Parasitology, Institute of Biomedical Science, University of São Paulo, São Paulo 05508-000, Brazil}
	
	\date{\today}
	
	\begin{abstract}
		We investigate the emergence of synchronization in heterogeneous networks of chaotic maps. Our findings reveal that a small cluster of highly connected maps is responsible for triggering the spark of synchronization. After the spark, the synchronized cluster grows in size and progressively moves to less connected maps, eventually reaching a cluster that may remain synchronized over time. We explore how the shape of the network degree distribution affects the onset of synchronization and derive an expression based on the network construction that determines the expected time for a network to synchronize. Understanding how the network design affects the spark of synchronization is particularly important for the control and design of more robust systems that require some level of coherence between a subset of units for better functioning. Numerical simulations in finite-sized networks are consistent with this analysis.
	\end{abstract}
	
	\pacs{05.45.Xt, 89.75.Hc, 05.45.Ac}
	\maketitle

	{\bf   Synchronization stands as a pivotal phenomenon in networks, exerting a profound impact across a spectrum of disciplines, including biology, chemistry and physics, and various man-made systems. Notably, recent investigations unveiled the phenomenon of chaotic units achieving sustained and stable cluster synchronization within heterogeneous networks. Because many real-world systems rely on cluster synchronization for their functioning and natural systems often comprise individuals with varying connection counts, understanding the fundamental mechanisms that underlie the spark of synchrony in heterogeneous networks holds promise for elucidating innovative control strategies that amplify coherence among interacting entities. While extensive research has addressed the conditions for synchronization, the dynamics preceding the spark of synchrony remains elusive.}
	
	\section{Introduction} \label{Sec:Introduction}
	
	The complex structure of real-world networks has been extensively studied, with significant progress made in recent years \cite{Milgram1967, Barabasi1999, Newman2002, Radicchi2004, Zhou2004}. These networks encompass a wide range of dynamical processes, from natural systems in biology \cite{Barabasi2004} to man-made technological systems \cite{Pastor2001}. In particular, when a network is composed of interacting dynamical systems, it can exhibit large-scale coherent behavior that spontaneously emerges under certain conditions \cite{Arenas2008}. For instance, fireflies in a swarm are known to synchronize their rhythms of flashing, resulting in highly correlated flashes among a significant proportion of the swarm \cite{Sarfati2023}. This phenomenon, known as synchronization, has also been observed in various other systems \cite{Winfree2002}.
	
	Researchers have conducted extensive investigations into a class of models called the Kuramoto models, which aim to describe the dynamics of coupled oscillators \cite{Strogatz2004, Rodrigues2016}. Kuramoto himself initially investigated the case of fully connected networks with $N$ coupled phase oscillators of equal strength \cite{Kuramoto2013}. His work revealed that in the continuum limit, there exists a critical coupling strength value---dependent on the distribution of the phase oscillator frequencies---that determines whether the phases of the oscillators in a coupled network will remain incoherent or eventually evolve into synchronized behavior. Later, it was found that the network topology has a significant influence on the dynamics of these systems \cite{Lee2005, Ichinomiya2004, GomesGardenes2007}.
	
	Extensive research has also been conducted on coupled collections of different systems with more general dynamics, such as mixed chaotic and periodic oscillators, chaotic maps, and others \cite{Baek2004, Viana2005, Pereira2017}. In particular, in collaboration with Tiago Pereira and Zheng Bian \cite{Corder2023}, we unravel the mechanism for the emergence of cluster synchronization in heterogeneous random networks. We developed a heterogeneous mean field approximation together with a self-consistent theory to determine the onset and stability of the cluster. The analysis showed that cluster synchronization occurs in a wide variety of heterogeneous networks. The system dynamics before its asymptotic behavior, however, remains undisclosed in random networks.
	
	
	
	The transition to synchronization in coupled networks  was studied from different perspectives, and it was observed that this phenomenon either occurs or is inhibited due to finite size fluctuations of the dynamical systems \cite{Komarov2015, Ottino2018}. In particular, it was observed that the transition time to synchrony is exponentially distributed in large homogeneous networks of chaotic circle maps \cite{Mendonca2023}. Here, we study the network dynamics before synchrony considering heterogeneous networks of individuals whose dynamics is governed by Bernoulli maps \cite{Liang2022}. Notably, we do not assume any fixed form for the network degree distribution and, for simplicity, we consider interactions among coupled maps that can be described by a sinusoidal function \cite{Stankovski2017}. We depict fundamental processes that drive the spark of spontaneous synchronization and observe that the transition time to synchrony is also exponentially distributed in heterogeneous networks of chaotic maps. We show that the theoretical analysis is consistent with numerical simulations in finite-sized networks.
	
	
	This work is organized as follows: in Sec. \ref{Sec:The model}, we define the dynamics of the system, the network structure, and some parameters that will be used to measure the levels of synchronization among units. Section \ref{Sec:Example} illustrates the phenomenon under investigation with a computational simulation, highlights the crucial role of the connected oscillators in sparking synchronization, and introduces the question we investigate through the paper. We start discussing the phenomenon of the emergence of synchrony in Sec. \ref{Sec:The initiation of Synchronization}, and in Sec. \ref{Sec:The complex-square+noise Markov chain}, we introduce the general model that describes the spark of synchronization. In Sec. \ref{Sec:The main result}, we apply the model to describe the spark of synchronization given a network structure. In Sec. \ref{Sec:The continuous approximation}, we present the continuous approximation of the model, derive the formulas presented in Sec. \ref{Sec:The main result}, and investigate how the shape of the degree distribution affects the emergence of synchrony. In the last section, we discuss our results.

	\section{The model} \label{Sec:The model}
	We study networks with $N$ coupled maps $z_i$ satisfying 
	\begin{equation}\label{Eq:M0}
		z^{t+1}_i \quad = \quad  2 z^{t}_i + \frac{\alpha}{C} \sum_{j}  A_{ij}\sin(z^{t}_j - z^{t}_i) \quad\quad \mbox{mod $2\pi$}.
	\end{equation}
	Here, $i=1,2,...,N$ labels each map, $\alpha$ is the {\em network coupling strength}, $C$ is the {\em network mean degree}, $A_{ij}$ is 1 if nodes $i$ and $j$ are connected and 0 otherwise, and $z^{t}_i \in \RR / 2\pi\ZZ$ is the state of map $i$ at time $t$. We use $S^1$ to denote $\RR / 2\pi\ZZ$.
	
	Let $\delta:\RR^+\to \RR^+$ be a probability density function with mean 1. We consider random networks with {\em degree distributions} $d \mapsto \delta(d/C)$ and denote by $d_i$ the degree of node $i$. 
	Let $w_i= d_i/C$. Thus, $w_i$ follows the distribution $\delta$ with mean 1. Notice that because we consider random graphs, the probability that node $i$ is connected to node $j$ is $d_i d_j/CN  = w_iw_j C /N$.

	We associate a complex number $u^t_i = e^{i z^t_i}$ to each state $z^t_i$. In this paper, we will move interchangeably between a state represented as a real number $z^t_i\in \RR/2\pi\ZZ$ or a complex number $u^t_i$ in the unit circle. For each map $i$, we define $\vec{V}^t_i\in \mathbb{C}$, $r^t_i \in \RR^+$, and $\theta^t_i \in [0,2\pi]$ as follows:
	\begin{equation}
		V^t_i  = \sum_{j} A_{ij} u^t_j = r^t_i e^{i \theta^t_i}.
		\label{Eq:Vi}
	\end{equation}
	
	By writing $\sin(z^{t}_j - z^{t}_i)$ as the imaginary part of $u^t_j\bar{u}^i_j$ (where $\bar{u}$ is the complex conjugate of $u$), using Eq. \ref{Eq:Vi}, we rewrite Eq. \ref{Eq:M0} as
	\begin{equation}
		z^{t+1}_i = 2 z^t_i + \frac{\alpha}{C} \Im(V^t_i\bar{u}^t_i),
		\label{Eq:M1}
	\end{equation}
	where $\Im(v)$ denotes the imaginary part of the complex number $v$.

	{\bf The distribution of $V^t_i$'s.}
	We approximate $V^t_i$ with a 2 D Gaussian distribution for all maps $i$. Note that each $V^t_i$ is a sum of $d_i$ vectors of the form $u^t_j$ (for $j$ connected to $i$), which are distributed in a somewhat random manner around the circle. To calculate the mean vector of those $u^t_j$'s, we sum all states of maps $j$, each multiplied by the probability that map $j$ is connected to $i$, namely, $w_i w_j C/N$, and divide by the total number of maps $d_i=C w_i$ connected to $i$. 
	We obtain a vector that we call {\em the weighted order parameter} and denote by $V^t$, i.e.,
	\begin{equation}
		V^t = \frac{1}{N}\sum_{j} w_j \cdot u^t_j.
		\label{Eq:V}
	\end{equation}
	Notice that $V^t$ is independent of $i$ and that the expected value for $V^t_i$ is $d_i\cdot V^t$.
	
	Next, to calculate the covariance matrix of $u^t_j$'s, denoted by $\Sigma$ (Eq. \ref{Eq:Sigma}), we consider the case when they are uniformly distributed in the circle (the worst case scenario). In this framework, on each coordinate, the variance of $\sin(z)$ for $z$ uniformly in the interval $[0,2\pi]$ is $1/2$. The covariance matrix $\Sigma$ is, thus, written as
	\begin{equation}
		\Sigma = \frac{1}{2}
		\begin{pmatrix}
			1 & 0 \\
			0 & 1
		\end{pmatrix}.
		\label{Eq:Sigma}
	\end{equation}
	Therefore, by the central-limit theorem, we can estimate $V^t_i \sim {\mathcal N}^2(d_i V^t, (d_i/2) {\bf I})$, where ${\mathcal N}^2$ is the Gaussian distribution on the plane and $\bf{I}$ is the 2x2 identity matrix.

	
	{\bf The weighted order parameter $V^t$.}
	The weighted order parameter $V^t$ is the average of $V^t_i$'s divided by the network mean degree $C$, i.e., $V^t = \frac{1}{CN}\sum_{i}\sum_{j} A_{ij} u^t_i$. It is computed by taking into account the degree of each map $j$. Following the same procedure used to define $V^t_i$, we can define $r^t \in \RR^+$ and $\theta^t \in [0,2\pi]$ such that $V^t=r^t e^{i\theta^t}$. Notice that the most common approach in the literature is to compute the order parameter as $\frac{1}{N}\sum_{j} u^t_j$, i.e., without taking into account the network degree distribution. In our heterogeneous framework, maps with higher connectivity exert a greater influence on network synchronization compared to poorly connected maps. The weighted order parameter $V^t$ (Eq. \ref{Eq:V}) better captures the phenomenon of synchronization in heterogeneous networks. This is because only a cluster of maps with similar connectivity synchronizes at each time step. When such clusters are small, little variation is observed in the unweighted order parameter. In particular, this is more evident at the moment of spark, when only a few highly connected maps spontaneously synchronize, making the weighted order parameter more appropriate to capture this event.

	\section{The phenomenon under investigation} \label{Sec:Example}
	Figure \ref{Fig:Figure1}A presents a simulation exemplifying the phenomenon under investigation in this work. We consider a heterogeneous network with $N=100,000$ coupled maps with degree distribution $d$ randomly sampled from an inverse-gamma distribution with mean $C=1,000$, a power-law exponent $\gamma=3$, and coupling strength $\alpha=15$. The initial states of maps are randomly sampled from a uniform distribution, and the dynamics of the system is governed by Eq. \ref{Eq:M0}. In this example, we observe the emergence of spontaneous synchronization initially among few highly connected maps, which occurs at a certain moment $t<10$. This moment marks the \textit{sparks} of synchronization. Notice in Fig. \ref{Fig:Figure1}A that the cluster of synchrony progressively evolves over time, such that at each subsequent time step, the most connected maps within the cluster lose synchronization concomitantly with the synchronization of less connected maps. At a certain point, the system reaches an asymptotic behavior where only maps with similar and intermediate connectivity levels remain synchronized. Interestingly, throughout this process, the maps within the cluster of synchrony exhibit a certain level of coherence among themselves, but the cluster does not synchronize to a fixed value. We refer to this phenomenon as \textit{partial synchronization}---for a more detailed discussion of the asymptotic behavior, see Ref. \cite{Corder2023}.
	
	\begin{figure}[h]
		\centering
		\includegraphics[width=86mm]{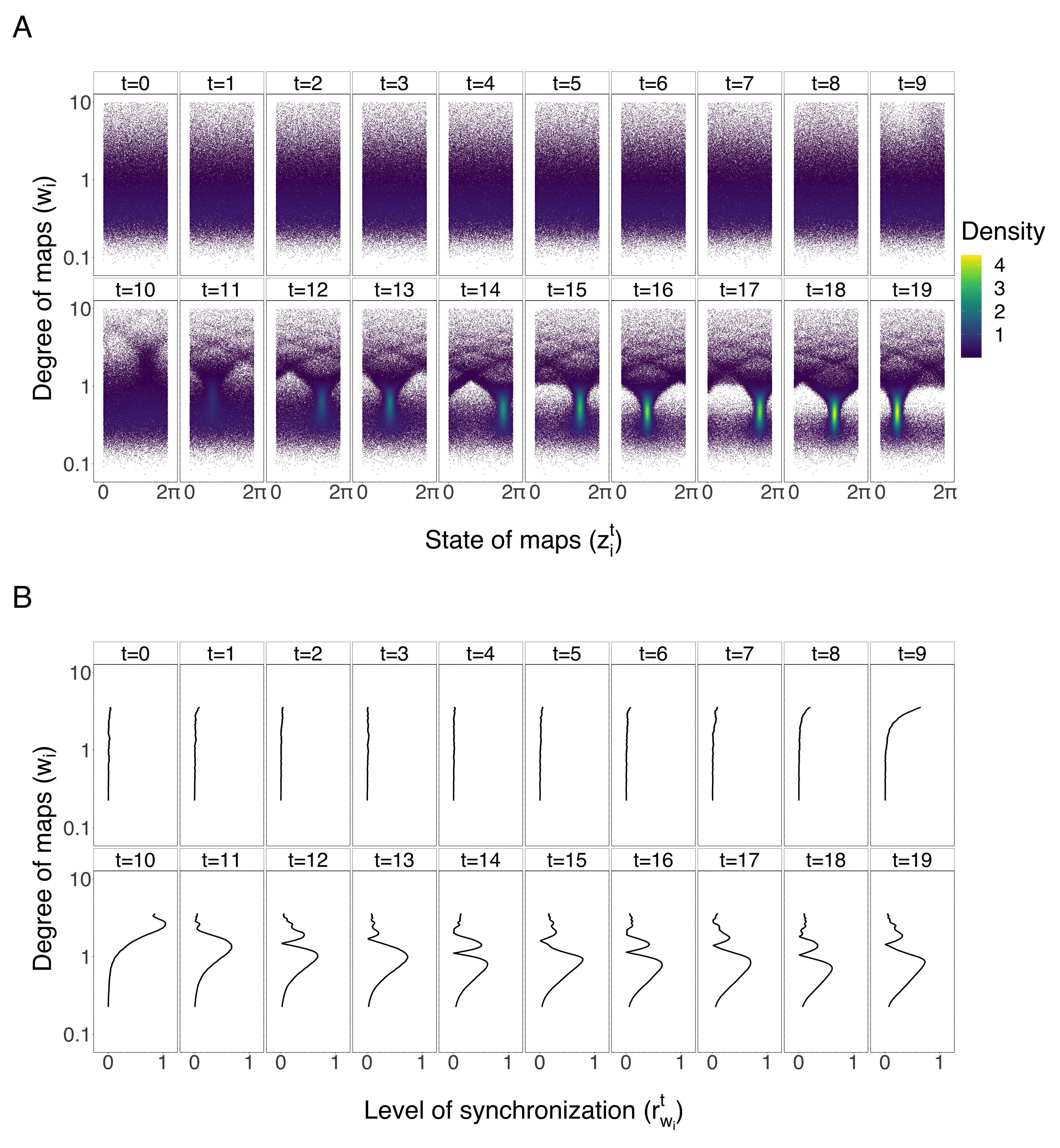}
		\caption{Emergence of partial synchronization in a heterogeneous network of coupled maps. To illustrate this phenomenon we consider a heterogeneous network with $N=100,000$ coupled maps and the degree distribution $d$ randomly sampled from an inverse-gamma distribution with mean $C=1,000$, a power-law exponent $\gamma=3$ (truncated at 10,000), and coupling strength $\alpha=15$. The figure presents a simulation where initial states of the maps ($z^{0}_i$, $i=1,2,...,N$) are randomly sampled from an uniform distribution at the beginning of the simulation, and the dynamics of the system is governed by Eq. \ref{Eq:M0}. Panel (A) shows the state of maps $z^{t}_i$ on the $x$-axis and the normalized degree $w_i=d_i/C$ on the $y$-axis, with the highest density areas in yellow. Panel (B) shows the level of synchronization on the $x$-axis, measured by the weighted order parameter $r^t_w$ (defined in Sec. \ref{Sec:The initiation of Synchronization}), and the normalized degree layer $w_i$ on the $y$-axis. We observe the emergence of spontaneous synchronization initially among few highly connected maps at a certain moment when $t < 10$. The cluster of synchrony progressively evolves over time, such that at each subsequent time step, the most connected maps within the cluster lose synchronization concomitantly with the synchronization of less connected maps. At a certain point, the system reaches an asymptotic behavior where only maps with similar and intermediate degree levels remain synchronized. Throughout this process, maps within the cluster of synchrony exhibit a certain level of coherence among themselves, but the cluster does not synchronize to a fixed value. For a more detailed discussion of the asymptotic behavior of this model, see Ref. \cite{Corder2023}.
			\label{Fig:Figure1}}
	\end{figure}

	{\bf The question under investigation.} \label{Sec:Question}
	Given a network design (defined by parameters $N,C,\alpha, \delta$), what is the expected time for the spark to occur when the network dynamics is governed by Eq. \ref{Eq:M0}? To answer this question, we formulate a single number $s(N,C,\alpha, \delta)$ [Eq. \ref{Eq:s}], which encapsulates all information about the network structure that allows us to estimate the expected time for the spark to occur. The number $s(N,C,\alpha, \delta)$ is defined, as we will see in detail in Sec. \ref{Sec:The main result}, as follows:
	\begin{equation}
		s(N,C,\alpha, \delta) = 
		\frac{\alpha^2}{\sqrt{2N}}\  M^\alpha(\delta) \  K^\alpha_\delta(C)/8,
		\label{Eq:s}
	\end{equation}
	where
	\begin{equation}
		M^\alpha(\delta)= M_3(\delta) \left(\sqrt{M_2(\delta)} + \frac{4}{\alpha\sqrt{M_2(\delta)}}\right),
		\label{Eq:M}
	\end{equation}
	with $M_2(\delta)$ and $M_3(\delta)$ representing, respectively, the second and third cumulative moments of the probability density function $\delta$. Note that $M^\alpha(\delta)$ depends solely on $\alpha$ and $\delta$, and it measures the effect of the network degree distribution sparsity on the expected time to spark. The term
	\begin{equation}
		K^\alpha_\delta(C) = \frac{\int_0^{\infty} w^3 e^{-w \frac{\alpha^2}{4C}} \delta(w) dw}{\int_0^{\infty} w^3 \delta(w) dw}
		\label{Eq:K}
	\end{equation}
	measures the impact of the noise resulting from the fact that each map is connected to a distinct set of neighbors---the term $K^\alpha_\delta(C)$ is, therefore, negligible for large values of $C$.
	
	We will show in Sec. \ref{Sec:The main result} that the expected time to the spark of synchronization is exponentially distributed and can be approximated by the exponential of multiples of $s(N,C,\alpha, \delta)^{-2}$. An example of the exponential behavior of the expected time to spark is presented in Fig. \ref{Fig:Figure2}A, where simulations were performed for different networks designs with degrees randomly sampled from inverse-gamma distributions. The algorithm to calculate the expected time to the emergence of partial synchrony in heterogeneous network of coupled maps is presented in  Appendix \ref{Sec:Appendix Simulation}.
	
	\begin{figure}[h]
		\centering
		\includegraphics[width=86mm]{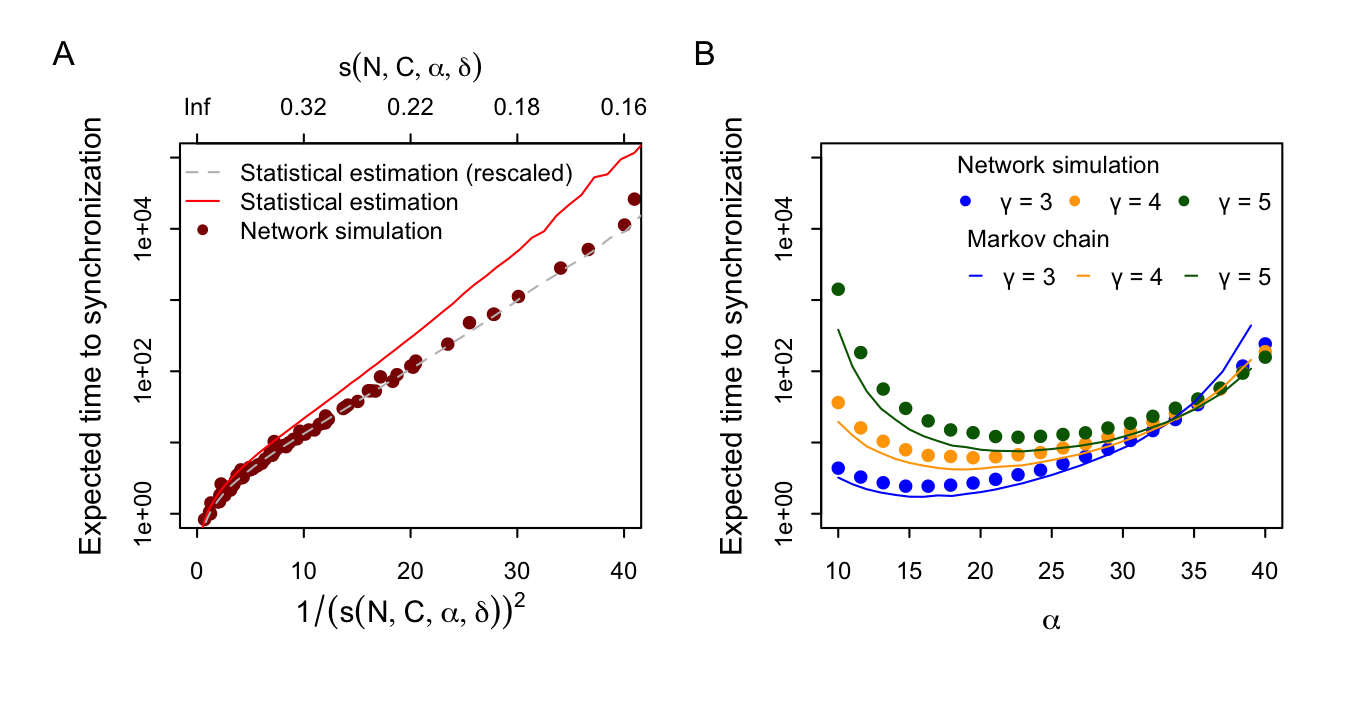}
		\caption{Expected time for the emergence of partial synchronization in a heterogeneous network of coupled maps. Panel (A) presents the expected sparking time of partial synchronization (in log-scale) as function of $s(N,C,\alpha, \delta)^{-2}$ [Eq. \ref{Eq:s}]. Each dot corresponds to the expected time for the emergence of partial synchronization considering different network configurations: $N\in\{30000,50000,100000\}$, $C\in\{300, 500, 1000\}$, $\alpha\in\{10,12,15\}$, and $\gamma = \{3,4,5\}$ (power-law exponent of inverse-gamma distribution $\delta$ with mean 1 and truncated at $10$---re-scaled to keep the mean 1). The solid red line shows the expected sparking time given by our theoretical results [$\EE(\sigma)$, as defined in Sec. \ref{Sec:The complex-square+noise Markov chain}]. The log and inverse-square scales were chosen to highlight the proportionality between coordinates, showing that the expected time to spark behaves as an exponential of $s(N,C,\alpha, \delta)^{-2}$. The dashed gray line shows the expected sparking time given by the theoretical results with $\sigma$ re-scaled by 10\% [that is, $\EE(1.1\cdot\sigma)$; more details in Sec. \ref{Sec:The complex-square+noise Markov chain}]. Panel (B) shows the expected time to spark for a network with $N=30,000$ coupled maps and mean degree $C=300$ computed for different values of $\alpha$ and $\gamma$. Dots are obtained from network simulations and solid lines are the statistical estimations (as described in Sec. \ref{Sec:The complex-square+noise Markov chain}). Blue, orange, and green colors represent $\gamma=3,4$ and $5$, respectively.
			\label{Fig:Figure2}}
	\end{figure}

	\section{The initiation of Synchronization} \label{Sec:The initiation of Synchronization}
	
	We stratify the process of synchronization in two parts:
	\begin{itemize}
		\item {\bf the spark}, which starts the process that is going to lead to synchronization, and
		\item {\bf the buildup}, which occurs after the spark and leads to the asymptotic synchronized behavior. 
	\end{itemize}

	{\bf The spark.}
	In Sec. \ref{Sec:The model} we defined the norm of the weighted order parameter $V^t$, denoted as $r^t$, which quantifies the degree of order in the system. The value of $r^t$ varies between 0 and 1, indicating complete disorder or perfectly order, respectively, in the network dynamics. When the system is out of synchrony, $r^t$ is close to zero and varies according to the intrinsic noise inherent in the network construction, i.e., resulting from the fact that a few maps either synchronize or desynchronize purely by chance before the emergence of synchronization. After the spark of synchronization, $r^t$ might progressively increase until it reaches an asymptotic behavior. It is important to note that the sparking of synchronization is not a well-defined moment. However, in our simulations, we observe that there are sufficiently large thresholds, such that when $r^t$ exceeds that threshold, the system is highly likely to evolve toward synchronization. The "spark" happens at some point before $r^t$ reaches the threshold, and it can be defined and studied in statistical terms, as we will explain in more detail below. The "spark" is the phenomenon discussed throughout all sections of this work.
	
	{\bf The buildup.} \label{ss: build up}
	A system may remain out of synchrony for many time steps before a moment in which, spontaneously, maps start transitioning to a more synchronized dynamics. After this moment, the level of synchronization starts increasing for a few time steps before reaching the asymptotic behavior. To understand this process, we first stratify maps based on their connectivity level $w$ and compute the associated $r^t_w$ (norm of the weighted order parameter of maps with connectivity $w$, defined as $\sum_{j \in N_w}A_{ij}u^t_j=r^t_w e^{i\theta^t_w}$, where $N_w$ and $\theta_w^t$ are, respectively, the set and mean state of maps with connectivity $w$). We notice that at the beginning of the transition phase from chaos to partial synchrony, the larger contribution to $r^t$ comes from nodes of higher degree [Fig. \ref{Fig:Figure1}B]. As the system evolves, the cluster of synchrony progressively moves through less connected oscillators until reaching maps that may remain synchronized over time. This phenomenon is due to the fact that, at each time step $t$, maps with connectivity $w$ close to the ratio $2/\alpha r^t$ are more likely to synchronize (more details about this condition are presented in Appendix \ref{Sec:Appendix Sync}). If they synchronize, the level of synchronization in the system increases; i.e., $r^t$ increases. Thus, in the next time step, maps with lower connectivity $w$ will be more likely to synchronize since the fraction $2/\alpha r^t$ is reduced. As $r^t$ keeps increasing, the level $w$ at which maps are more likely to synchronize keeps decreasing. At a certain moment, the cluster of synchronization comprises maps with a certain connectivity level $w$ in which $r^t$ is close to its maximum value for that system. At that point, $r^t$ stops increasing and, maybe, even decrease, and the system is close to its asymptotic behavior. We analyzed the asymptotic properties of this phenomenon in details together with Tiago Pereira and Zheng Bian in another work \cite{Corder2023}.

	\section{The Markov process} \label{Sec:The complex-square+noise Markov chain}
	
	The main claim of this paper is that the dynamics of the weighted order parameter $V^t$ prior to partial synchronization can be closely approximated by a Markov process as in Definition \ref{Def:CSPN}.
	
	\begin{definition}
		Given parameters $k,\sigma \in \RR^+$, let 
		\begin{itemize}
			\item $v_0 = 0 \in \CC$ and 
			\item $v_{t+1} = k v_{t}^2 + \epsilon_s$,
		\end{itemize}
		where $\epsilon_s \in \CC$ follows a 2D-Gaussian distribution with mean 0 and covariance matrix $\sigma^2{\bf I}$, where ${\bf I}$ is the 2x2 identity matrix. Here, the \textit{square} operation on $v_t$ refers to the square operation on complex numbers. 
		\label{Def:CSPN}
	\end{definition}
	
	Notice that if $||v_t||\ll1/k$, then $||v_{t+1}|| \approx \epsilon_s$ for large enough Gaussian-noise $\epsilon_s$. In this scenario, the dynamics of $v_t$ will oscillate around the origin according to the Gaussian noise. Eventually, $||v_t||$ may increase and the term $k v_{t}^2$ will have more influence on next $v_{t+1}$. If at some point, $||v_t||$ is much greater than both $1/k$ and $\sigma$, then $k ||v_{t}^2||$ will be even greater. After this moment, $v_{t}$ will very likely keep increasing forever and {\em escape} any bounded region. We are interested in studying how long it takes for this escape to happen.
	
	An important observation is that the Markov process under investigation is determined only by the product $k\sigma$. Suppose a second process governed by parameters $k'$ and $\sigma'$ such that $k'\sigma'=k \sigma$. It is not hard to see that the dynamics of the two systems is equivalent, via the transformation $v' = (k/k') v$, as the noise in the second system would have standard deviation $(k/k')\sigma = \sigma'$. For the rest of this section, we consider the case for $k=1$.
	
	Figure \ref{Fig:Figure3} illustrates that the {\em expected time to escape} for the Markov process, as in Definition \ref{Def:CSPN} with a fixed value of $\sigma$ (and $k=1$), presents a similar pattern for the expected time to escape obtained from the dynamics of the network system. We proceeded as follows. Let us consider a value $B$ large enough such that, if the $||v_t||>B$, we can be almost sure that the process will never return to a neighborhood of 0 (say $B=100$). Consider that the random variable $T$ models the time $t$ in which $||v_t||>B$. Notice that, by time $T$, the process has already escaped the neighborhoods of 0 (almost surely) forever. Notice also that the event of \textit{escaping} happens before $T$, although we do not exactly know the precise moment. However, besides some noise, we obtain from simulations that the distribution of $T$ is very close to a constant $c$ plus an geometric distribution. Importantly, the decay $\lambda$ of the geometric distribution is independent of $B$. The value of the constant $c$ is given by the number of time steps from the escaping moment itself to the moment in which $||v_t||>B$. Therefore, the constant $c$ depends on the choice of $B$. Figure \ref{Fig:Figure3} presents an example with $B=100$ (and $k=1$), which graphically implies $c=5.25$ and $\lambda=12.01$.
	
	\begin{figure}
		\centering
		\includegraphics[width=86mm]{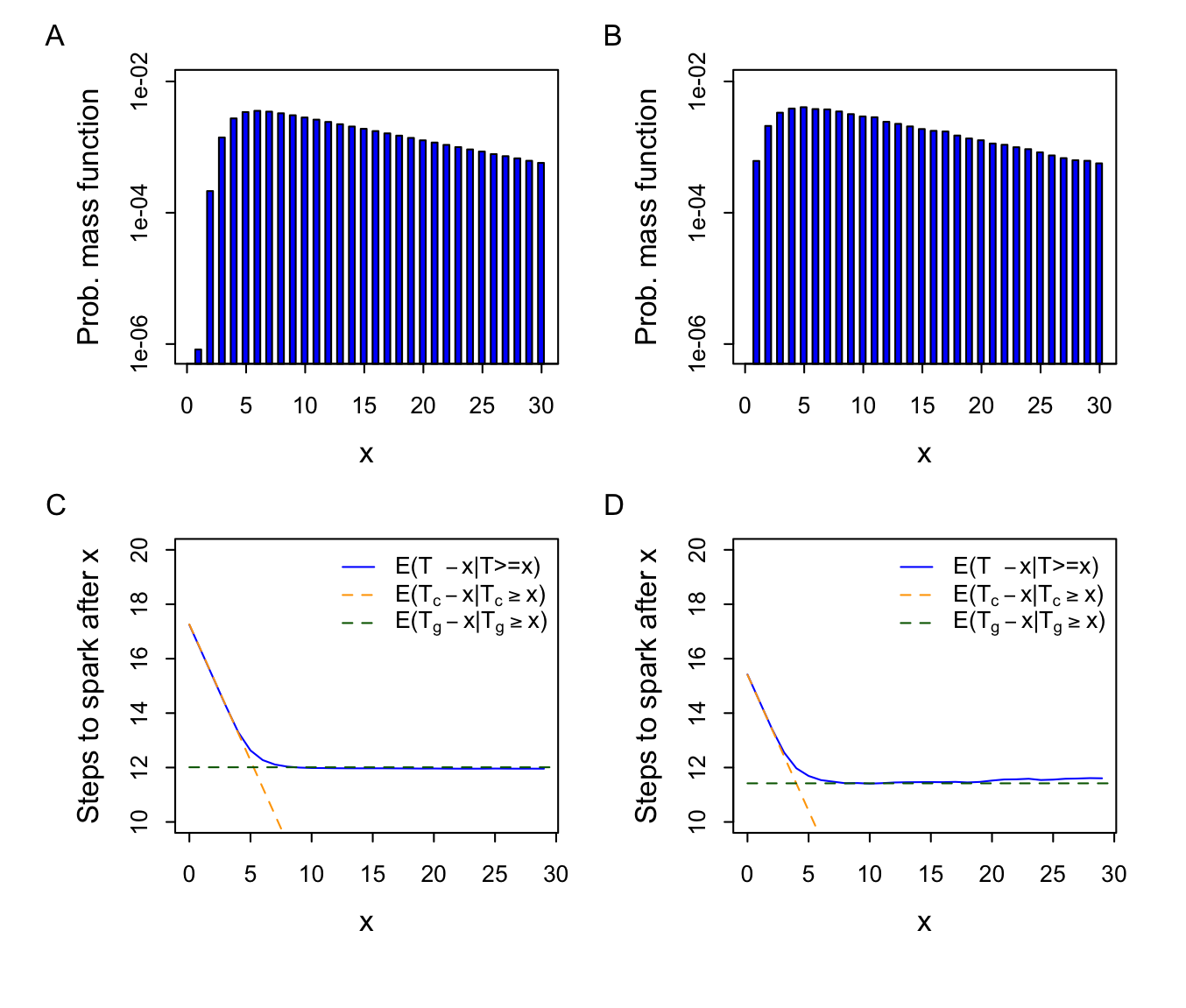}
		\caption{Expected time of escaping $\EE(\sigma)$. Panels (A) and (C) show the probability mass function and the expected value of $T-x$, respectively, given that $T>x$ plotted against $x$, where $T$ is the time to escape random-variable for the Markov process (Def. \ref{Def:CSPN}) with $\sigma = 0.36$ and $B=100$. The orange and green dashed lines represent $\EE(T_g-x|T_g>x)$ and $\EE(T_c-x|T_c>x)$, respectively, where $T_c=17.26$ and $T_g$ is a geometric random variable with mean $\lambda= 12.01$.  This shows how $T$ is very close to $5.25+T_g$. Panel (B) and (D) show the probability mass function and the same expected value of $T-x$, respectively, but with $T$ representing the time-to-escape obtained from a simulation of a network with $N = 50,000$, $C = 500$, $\gamma = 4$, and $\alpha = 12$ and dynamics governed by Eq. \ref{Eq:M0}. Orange and green dashed lines consider values $T_c=15.41$ and $\lambda= 11.42$.  
			\label{Fig:Figure3}
		}
	\end{figure}
	
	To show that the distribution of the random variable $T$ is  close to a constant plus a geometric distribution, in Fig. \ref{Fig:Figure3} we illustrate the expected value $\EE(T-x|T>x)$ of the remaining number of time steps, given that the number of steps is above $x$. One can see that from some fixed value of $x$ onward, $\EE(T-x|T>x)$ is constant, and, thus, $T$ is memoryless like the geometric distribution. Recall that the geometric is the only distribution on $\mathbb{N}$ for which $\EE(T-x|T>x)$ is independent of $x$. Notice that for small values of $x$, $\EE(T-x|T>x)$ is linear in $x$ with a slope of $-1$. 
	
	\begin{definition}
		We define the {\em expected time to escape} as the expected value of the geometric part of the distribution of a random variable $T$, i.e., the value of $\EE(T-x| T> x)$ for some large enough $x$. We call this function $\EE(\sigma)$.
	\end{definition}
	
	Importantly, we do not derive a closed form equation for the function $\EE(\sigma)$.

	\section{The main result} \label{Sec:The main result}
	
	Here, we write the Markov process presented in Sec. \ref{Sec:The complex-square+noise Markov chain} now accounting for the network structure and its dynamics. Given a network design defined by parameters $N$, $C$, $\alpha$, and $\delta$, the dynamics of the weighted order parameter $V^t$ before synchronization behaves as the Markov process as in Definition \ref{Def:CSPN} with the following parameters:
	\begin{equation} \label{Def:CSPN_parameters}
		\begin{split}
			\sigma & = \sqrt{M_2(\delta)/(2N)} \text{ and} \\     
			k & = \alpha^2\  K^\alpha_\delta(C) \ M_3(\delta) (1+ 4/\alpha M_2(\delta))/8,
		\end{split}
	\end{equation}
	where $M_2(\delta) = \int_0^{\infty} w^2 d\delta(w)$ and $M_3(\delta)=\int_0^{\infty} w^3 d\delta(w)$ are, as defined before, the second and third cumulative moments of the probability density function $\delta$, respectively. $K^\alpha_\delta(C)$ (derived in Sec. \ref{Sec:The continuous approximation}) is the only parameter that depends on $C$, and it measures the impact of the noise resulting from the fact that each map is connected to a distinct set of neighbors. 
	
	Figure \ref{Fig:Figure2}A presents computational simulations performed for different network designs with degrees randomly sampled from inverse-gamma distributions and shows that the expected time to synchronization can be approximated by the Markov process as in Definition \ref{Def:CSPN}, with parameters defined based on the network structure as presented in \ref{Def:CSPN_parameters}.
	
	As we will see in Sec. \ref{Sec:The continuous approximation}, the expected time to spark of synchronization can, thus, be approximated by $\EE(k\sigma)$, which is approximated by,
	\begin{equation}
		\EE\left(  \frac{\alpha^2}{\sqrt{2N}}\  M^\alpha(\delta) \  K^\alpha_\delta(C)/8 \right),
		\label{Eq:E}
	\end{equation}
	where $M^\alpha(\delta)$ [Eq. \ref{Eq:M}; derived in Sec. \ref{Sec:The continuous approximation}] is a term that measures the effect of the sparsity of the network degree distribution $\delta$ on the expected time to spark. Figure \ref{Fig:Figure4} presents how $M^\alpha(\delta)$ and $K^\alpha_\delta(C)$ [Eq. \ref{Eq:K}] vary according to $\delta$ and $C$.
	
	\begin{figure}[h]
		\centering
		\includegraphics[width=86mm]{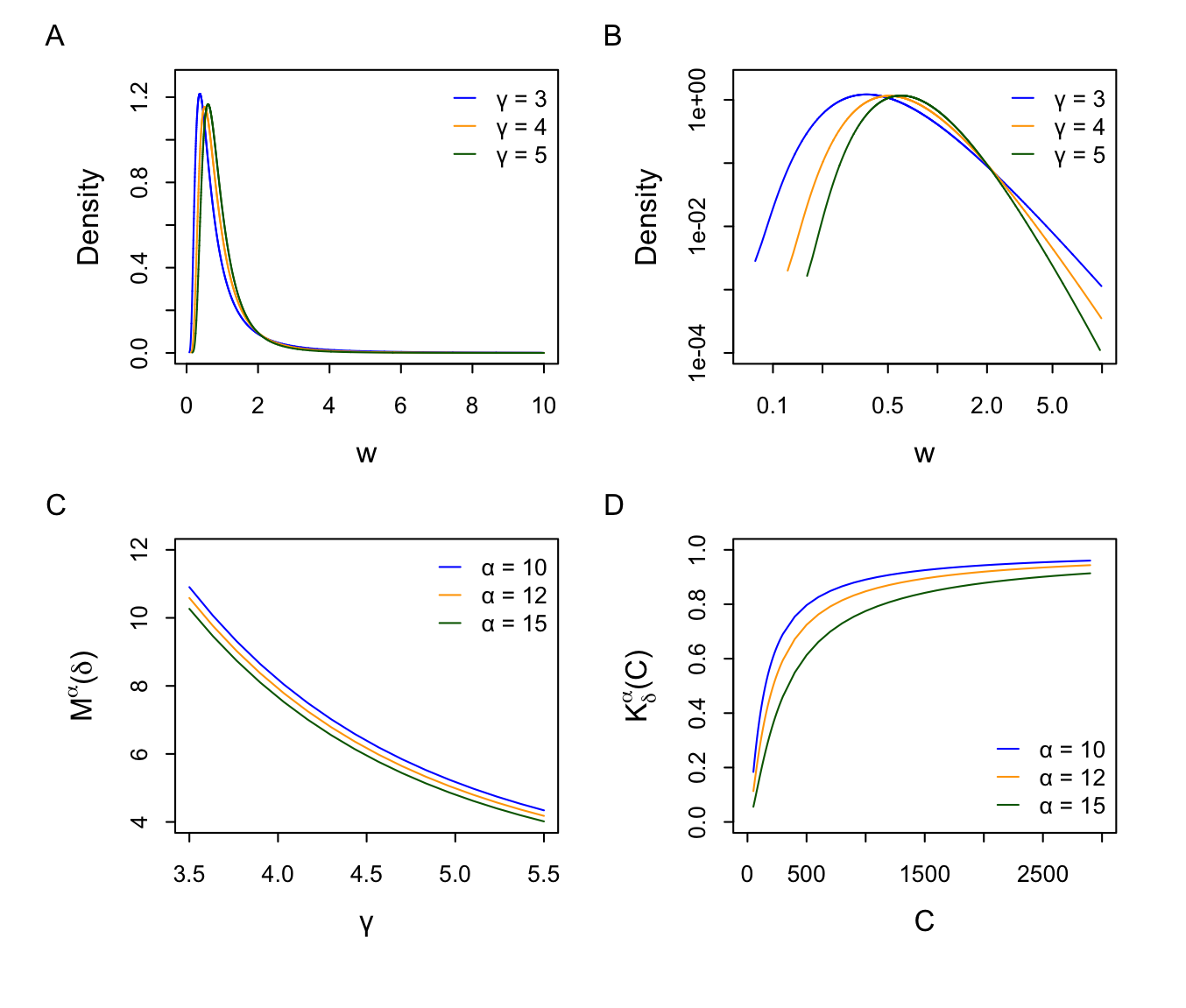}
		\caption{The parameters of the Markov process under investigation. Panels (A) and (B) show the shape of the inverse gamma distributions in linear--linear and log--log scales, respectively, for power-law exponents $\gamma=3$ (blue curve), $\gamma=4$ (yellow), and $\gamma=5$ (green). Panel (C) illustrates how $M^\alpha(\delta)$ [Eq. \ref{Eq:M}] varies according to the power-law exponent $\gamma$. Panel (D) illustrates how $K^\alpha_\delta(C)$ [Eq. \ref{Eq:K}] varies according to the network mean degree $C$. In panels (C) and (D), blue, orange, and green curves correspond to $\alpha = 10, 12$, and $15$, respectively.
			\label{Fig:Figure4}
		}
	\end{figure}
	
	Eq. \ref{Eq:E} shows how the values $\alpha$, $C$, $N$, and the probability distribution $\delta$ are related to the expected time to spark. Some observations are as follows: (i) the larger the value of the coupling strength $\alpha$, the faster the spark; (ii) the larger the number of maps $N$ in a network, the slower the spark; (iii) the time to spark is correlated with the square root of the number of maps $N$ and to the square of the network coupling strength $\alpha$; (iv) the more dispersed the network degree distribution [larger values of $M_3(\delta)$], the faster the spark; and (v) the term $\sqrt{M_2(\delta)} + \frac{4}{\alpha\sqrt{M_2(\delta)}}$ is smallest when $M_2(\delta)=4/\alpha$. Since $M_2(\delta)>1>4/\alpha$, this term also increases with $M_2(\delta)$ (dispersion of $\delta$).
	
	Notice that the parameter $\sigma$ is the standard deviation of the weighted order parameter $V^t$, calculated for $N$ maps with states randomly sampled from a uniform distribution, and on a network with degree distribution $\delta$. Therefore, $\sigma = \sqrt{M_2(\delta)/(2N)}$. We argue that this is a good approximation since the state of maps before synchrony can be approximated by a uniform distribution. In Sec. \ref{Sec:The continuous approximation} we deeply investigate the dynamics of the system and present the derivation of the formula for $k$.

	\section{The continuous approximation} \label{Sec:The continuous approximation}
	
	In this section, we dive into a deep investigation of the network dynamics of coupled maps governed by Eq. \ref{Eq:M1} and elucidate the motivations behind the claims made in Secs. \ref{Sec:The model}--\ref{Sec:The main result}. We will derive a continuous approximation to Eq. \ref{Eq:M1} such that the states of maps are well approximated by a probability distribution that evolves over time. We notice, however, that we are not deriving the limit as the size $N$ of the graphs converges to $\infty$. Instead, we assume that $N$ is sufficiently large such that the continuous version of the model provides a good approximation of its dynamics, but not so large as to eliminate the stochastic behavior inherent in a finite-sized network. We will demonstrate that small random variances in the degree distribution introduced by random networks play a crucial role in the sparking of synchronization.

	
	{\bf Acting on continuous probability distributions.}
	For each connectivity layer $w\in\RR^+$, we consider a probability density function $\rho_w(z)$ with $z\in S^1$. We use $\rho$ to denote the probability density for the pair $(w,z)\in \RR^+\times S^1$, denoted as $\rho(w,z)= \rho_w(z)\delta(w)$. This function evolves on time (we omit the superscript $t$) as the dynamical system acts on the space of probability density functions $\rho$ on $ \RR^+\times S^1$ that are compatible with the network degree distribution $\delta$.
	
	Consider a fixed value of $w$ and denote $\tau(z) = \rho_w(z)$ as a probability density function on $S^1$. Given the order parameter $V \in \CC$, we define an operator $\FF_{V}$ acting on probability distribution functions defined on $S^1$. $\FF_{V}(\tau)$ describes the probability distribution on $S^1$ obtained by applying one time step of the dynamics. To this end, for each map $i$ with connectivity $w$ and state $z_i\in S^1$, we apply the function
	\begin{equation}
		z_i\mapsto 2z_i + \frac{\alpha}{C}\Im(V_i \bar{u}_i),
		\label{Eq:zi}
	\end{equation}
	where $\bar{u}_i = e^{-iz_i}$ and $V_i$ is the sum of all states $u_j$s, with $j$ connected to $i$ (as described in Sec. \ref{Sec:The model}). Recall from Sec. \ref{Sec:The model} that $V_i$ is distributed in the complex plane according to a 2D Gaussian distribution with mean $d_iV$ and covariance matrix $(d_i/2){\bf I}$, where $d_i= wC$ is the number of connections of map $i$. It implies that $\Im(V_i \bar{u}_i)$ is distributed following a 1D Gaussian distribution with mean $wC\Im(V \bar{u}_i)$ and variance $wC/2$. Thus, the dynamics can be described by mapping $z_i \mapsto 2z_i + \alpha w\ \Im(V \bar{u}_i) + \epsilon_i$, where $\epsilon_i$ is a 1D Gaussian random variable with mean $0$ and variance $\alpha^2w/2C$.
	
	We decompose the operator $\FF_{V}$ in two operators also acting on probability distribution functions defined on $S^1$ as follows:
	\begin{equation}
		\FF_{V} = \DD_{\sqrt{\alpha^2w/2C}}\circ \MM_{\alpha wV}.
		\label{Eq:Fv}
	\end{equation}
	The main operator $\MM_P$, for $P\in \CC$, maps a distribution $\tau$ to the distribution obtained after applying the function $z_i\mapsto 2z_i + \Im(P \bar{u}_i)$. The diffusion operator $\DD_{\nu}$ maps a distribution $\tau$ to the distribution we get from applying the function $z_i\mapsto z_i + \epsilon_i$, where $\epsilon_i$ is a normal random variable with mean $0$ and standard deviation $\nu$.  In this framework, the mean field $\EE_{\tau}$ at each layer $w$ can be obtained, at each time step $t$, as in Def. \ref{Def:Mean_field}.
	
	\begin{definition}
		Given a probability distribution function $\tau$ on $S^1$, let $\EE_{\tau}$ be the expected value of $e^{iz}$ where $z \sim \tau$, that is,
		\[
		\EE_{\tau} = \int_{S^1} e^{iz}\tau(z)dz.
		\]
		\label{Def:Mean_field}
	\end{definition}
	
	Next, we analyze the operators $\DD_\nu$ and $\MM_P$.

	
	{\bf The diffusion operator $\DD_\nu$.}
	$\DD_{\nu}(\tau)$ is the distribution obtained after mapping a distribution $\tau$ to the function $z_i\mapsto z_i + \epsilon_i$, where $\epsilon_i$ is a 1D Gaussian random variable with mean $0$ and standard deviation $\nu$. The diffusion operator acts on the mean field $\EE_{\tau}$ by shrinking its length by a factor of $e^{-\nu^2/2}$ but keeping its direction (Lemma \ref{L:Dp}).
	
	\begin{lemma}
		$\EE_{\DD_{\nu}(\tau)} = e^{-\nu^2/2} \EE_{\tau}$ {\normalfont(proof in Appendix \ref{Sec:Appendix Proofs}).}
		\label{L:Dp}
	\end{lemma}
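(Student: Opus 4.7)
The plan is to recognize $\EE_{\tau}$ as a characteristic-function evaluation and then exploit independence of the original state and the diffusive noise. Unfolding the definition, if $z\sim\tau$ and $\epsilon\sim\mathcal N(0,\nu^2)$ is independent of $z$, then $\DD_\nu(\tau)$ is the distribution of $y = z+\epsilon \pmod{2\pi}$ on $S^1$, so by Definition \ref{Def:Mean_field},
\begin{equation*}
\EE_{\DD_\nu(\tau)} \;=\; \mathbb{E}\bigl[e^{iy}\bigr] \;=\; \mathbb{E}\bigl[e^{i(z+\epsilon)}\bigr],
\end{equation*}
where the last equality uses that $e^{i\cdot}$ is $2\pi$-periodic, so the modular reduction on $S^1$ is invisible to it.

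Next I would use independence of $z$ and $\epsilon$ to factor the expectation as $\mathbb{E}[e^{iz}]\cdot\mathbb{E}[e^{i\epsilon}]$. The first factor is exactly $\EE_{\tau}$ by definition. The second factor is the characteristic function of a centered Gaussian with variance $\nu^2$ evaluated at $1$, which is the standard Gaussian integral
\begin{equation*}
\mathbb{E}[e^{i\epsilon}] \;=\; \frac{1}{\nu\sqrt{2\pi}}\int_{-\infty}^{\infty} e^{i\epsilon}\,e^{-\epsilon^2/(2\nu^2)}\,d\epsilon \;=\; e^{-\nu^2/2},
\end{equation*}
obtained by completing the square (or by citing the well-known Gaussian characteristic function). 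Multiplying the two factors gives $\EE_{\DD_\nu(\tau)} = e^{-\nu^2/2}\,\EE_{\tau}$, as claimed.

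There is essentially no obstacle here; the only subtlety to flag explicitly is the compatibility of the wrapping on $S^1$ with the complex exponential, which is what allows us to treat $z+\epsilon$ as a real random variable inside $e^{i(\cdot)}$ and thereby reduce the whole computation to the elementary identity that the Fourier transform of a Gaussian is a Gaussian. Everything else is just bookkeeping with Fubini/independence.
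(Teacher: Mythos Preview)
Your argument is correct. The key identity you invoke, $\int_{\RR} e^{i\epsilon} G_\nu(\epsilon)\,d\epsilon = e^{-\nu^2/2}$, is exactly the same one the paper uses, so the two proofs share the same analytic core.

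The route, however, is somewhat different. The paper does not prove Lemma~\ref{L:Dp} directly: it first establishes Lemma~\ref{L:E} ($\EE_{\tau_a}=a$) and Lemma~\ref{L:Dsigma} ($\DD_\nu(\tau_a)=\tau_{e^{-\nu^2/2}a}$, computed explicitly as a convolution on the density level), and then reads off $\EE_{\DD_\nu(\tau_a)}=e^{-\nu^2/2}a$ as an immediate consequence --- so in the appendix the statement is effectively verified only for distributions of the special form $\tau_a$. Your argument, by contrast, works at the level of the first Fourier coefficient and never touches the full density: by factoring $\mathbb{E}[e^{i(z+\epsilon)}]$ via independence you obtain the result for \emph{arbitrary} $\tau$ on $S^1$, which is precisely what Lemma~\ref{L:Dp} asserts. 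The paper's detour through $\DD_\nu(\tau_a)=\tau_{a'}$ buys something it needs elsewhere (the explicit action of $\DD_\nu$ on the one-parameter family $\tau_a$), whereas your approach is shorter and more general for the lemma as stated.
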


	
	{\bf The main operator $\MM_P$.}
	We study the main operator $\MM_P$ on probability distributions that are very close to the uniform distribution, i.e., the case before the spark of synchronization. Since $z\in [0,2\pi]$, the uniform distribution is given by $\tau_0(z)= 1/2\pi$.
	\begin{definition}
		Given $a = \epsilon e^{i\theta}\in \CC$, let $\tau_a$ be the probability distribution on $S^1$ of the form,
		\[
		\tau_a(z) = \frac{1}{2\pi}\big( 1 + 2\epsilon \cos(z - \theta)\big).
		\]
		\label{Def:tau}
	\end{definition}
	Notice that we can also write $\tau_a(z) = \frac{1}{2\pi}\big( 1 + 2\Re(e^{iz} \bar{a})\big)$, where $\bar{a}$ is the complex conjugate of $a$, and thus, the uniform distribution corresponds to $\tau_a$ for $a=0$. Interestingly, we will see below that the application of $\MM_P$ or $\DD_\nu$ on a probability distribution of the form $\tau_a(z)$ maps it to another distribution of the form $\tau_{a'}(z)$ up to a small error. Before, we show on Lemma \ref{L:E} that the expected value of $\tau_a$ is $a$.
	
	\begin{lemma} 
		$\EE_{\tau_a}=a$ {\normalfont(proof in Appendix \ref{Sec:Appendix Proofs})}.
		\label{L:E}
	\end{lemma}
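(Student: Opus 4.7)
The plan is to prove $\EE_{\tau_a}=a$ by direct computation of the defining integral. Writing $a=\epsilon e^{i\theta}$ and unpacking Definition \ref{Def:Mean_field}, I would start from
\[
\EE_{\tau_a} \;=\; \int_0^{2\pi} e^{iz}\,\tau_a(z)\,dz \;=\; \frac{1}{2\pi}\int_0^{2\pi} e^{iz}\bigl(1 + 2\epsilon\cos(z-\theta)\bigr)\,dz,
\]
and split this into the two obvious pieces.

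First I would dispose of the constant-$1$ piece: $\int_0^{2\pi}e^{iz}dz=0$ since $e^{iz}$ has zero mean over a full period. For the remaining piece I would rewrite the cosine using Euler's formula, $\cos(z-\theta)=\tfrac{1}{2}(e^{i(z-\theta)}+e^{-i(z-\theta)})$, so that the integrand becomes $\tfrac{\epsilon}{2\pi}(e^{i(2z-\theta)}+e^{i\theta})$. The first exponential has frequency $2$ and integrates to $0$ over $[0,2\pi]$, while the second term is constant and integrates to $2\pi e^{i\theta}$. Collecting gives $\epsilon e^{i\theta}=a$, as required.

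There is no real obstacle here; the argument is just orthogonality of the characters $\{e^{inz}\}_{n\in\ZZ}$ on $S^1$ applied to the first two Fourier modes of $\tau_a$. The only point worth remarking is that the alternative form $\tau_a(z)=\tfrac{1}{2\pi}(1+2\Re(e^{iz}\bar a))$ noted just above the lemma makes the computation even more transparent: $\tau_a$ is a truncated Fourier series with constant term $\tfrac{1}{2\pi}$ and first harmonic matched precisely so that the $n=1$ Fourier coefficient equals $a$, which is exactly $\EE_{\tau_a}$.
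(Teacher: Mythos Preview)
Your proof is correct and follows essentially the same approach as the paper: both split the integral into the constant piece (which vanishes) and the $\cos(z-\theta)$ piece, then evaluate the latter directly. The only cosmetic difference is that the paper substitutes $z'=z-\theta$ and then splits $e^{iz'}\cos z'$ into its real and imaginary parts, whereas you use Euler's formula to reduce to exponentials; your Fourier-coefficient remark is a nice conceptual gloss not present in the paper's argument.
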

	
	The distribution $\tau_a$ is essentially the simplest distribution on $S^1$ that one can think of with mean field $a$. Thus, we assume that, before synchronization, there exists a complex number $a_w$ for each $w$, with norm of $a_w$ small, such that $\rho_w = \tau_{a_w}$. Lemmas \ref{L:Mp} and \ref{L:Dsigma} present how the operators $\MM_P$ and $\DD_\nu$, respectively, act on distributions of the form $\tau_a$.
	
	\begin{lemma}
		Given $a \in \CC$ and $P\in \CC$ ,
		\[
		\MM_P(\tau_a) \approx \tau_{a'},
		\]
		where $a' = P(P+4a)/8$ {\normalfont(proof in Appendix \ref{Sec:Appendix Proofs})}.
		\label{L:Mp}
	\end{lemma}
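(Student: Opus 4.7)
\textbf{Proof plan for Lemma \ref{L:Mp}.} My plan is to identify $\MM_P(\tau_a)$ with a distribution of the form $\tau_b$, and then pin down $b$ via its mean field. The distribution $\tau_b$ of Def.~\ref{Def:tau} is characterized among probability densities on $S^1$ by having nonzero Fourier coefficients only at frequencies $-1,0,1$, and by Lemma \ref{L:E} its mean field equals $b$. So it suffices to show that (i) the mean field of $\MM_P(\tau_a)$ equals $a' = P(P+4a)/8$ up to higher-order corrections, and (ii) the Fourier modes of $\MM_P(\tau_a)$ at frequencies $|k|\geq 2$ are of strictly higher order than $|a'|$. I work in the pre-spark regime, in which $|P|=\alpha w|V|$ and $|a|$ are small, so Taylor-expanding in $P$ is justified.

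For (i), if $z\sim\tau_a$ then $z' = 2z + \Im(Pe^{-iz})$ is distributed as $\MM_P(\tau_a)$, by definition of $\MM_P$. Writing $\Im(Pe^{-iz}) = (Pe^{-iz}-\bar Pe^{iz})/(2i)$ gives
\[
e^{iz'} \;=\; e^{2iz}\exp\!\Bigl(\tfrac{1}{2}\bigl(Pe^{-iz}-\bar Pe^{iz}\bigr)\Bigr),
\]
which I expand to second order in $P$ as
\[
e^{iz'} = e^{2iz} + \tfrac{P}{2}e^{iz} - \tfrac{\bar P}{2}e^{3iz} + \tfrac{P^2}{8} - \tfrac{|P|^2}{4}e^{2iz} + \tfrac{\bar P^2}{8}e^{4iz} + O(|P|^3).
\]
A direct Fourier calculation from Def.~\ref{Def:tau} gives $\EE_{\tau_a}[e^{imz}] = 1, a, \bar a, 0$ according to whether $m = 0, 1, -1$, or $|m|\geq 2$ respectively. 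Only the terms $Pe^{iz}/2$ and $P^2/8$ survive, producing
\[
\EE_{\MM_P(\tau_a)} \;=\; \tfrac{Pa}{2} + \tfrac{P^2}{8} + O(|P|^3) \;=\; \tfrac{P(P+4a)}{8} \;=\; a'.
\]

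For (ii), the analogous identity $e^{ikz'} = e^{2ikz}\exp\!\bigl(\tfrac{k}{2}(Pe^{-iz}-\bar Pe^{iz})\bigr)$ shows that after a second-order expansion the only frequencies appearing in $e^{ikz'}$ are $2k,\, 2k\pm 1,\, 2k\pm 2$, all of absolute value $\geq 2$ when $|k|\geq 2$; so the expectation against $\tau_a$ vanishes at this order and $\EE_{\tau_a}[e^{ikz'}] = O(|P|^3)$. This identifies $\MM_P(\tau_a) = \tau_b + O(|P|^3)$ for some $b$, and (i) forces $b=a'$. I expect the main obstacle to be the bookkeeping of error terms: one has to confirm that the $O(|P|^3)$ corrections really are of strictly higher order than $|a'|\sim |P|^2 + |P||a|$, and that the Taylor expansion controls the remainder uniformly in $z$ with constants that do not blow up in the pre-spark regime. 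The remainder of the argument is a routine Fourier computation using the simple structure of $\tau_a$.
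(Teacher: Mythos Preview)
Your argument is correct and is a genuinely different route from the paper's. The paper computes the push-forward density $\MM_P(\tau_a)$ directly via the change-of-variables formula $\MM_P(\tau)(x)=\sum_{z\in f^{-1}(x)}\tau(z)/f'(z)$ with $f(z)=2z+\Im(Pe^{-iz})$: it first rotates to make $\arg P=0$, solves $f(z)=x$ to first order in $|P|$, Taylor-expands $\tau(z)$ and $1/f'(z)$ to second order, sums the two preimages (coming from $x$ and $x+2\pi$), and recognizes the result as $\tau_{a'}$ before undoing the rotation. Your approach instead works entirely on the Fourier side: you compute the moments $\EE_{\tau_a}[e^{ikz'}]$ by expanding $e^{ikz'}=e^{2ikz}\exp\!\bigl(\tfrac{k}{2}(Pe^{-iz}-\bar Pe^{iz})\bigr)$ in $P$ and reading off which frequencies survive against $\tau_a$.

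Your method is cleaner in that it bypasses the explicit inversion of $f$ and the two-branch bookkeeping, and the observation that the modes $|k|\ge2$ receive only contributions from frequencies $2k,2k\pm1,2k\pm2$ (all $\ge2$ in absolute value) is a tidy way to see why $\MM_P$ preserves the $\tau_a$ family to this order. The paper's approach, on the other hand, produces the density $\MM_P(\tau_a)(x)$ pointwise rather than only its Fourier coefficients, which is slightly more information. Either way the same $a'=P(P+4a)/8$ falls out, with the same $O(|P|^3)$ error scale.
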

	
	\begin{lemma}
		Given $a \in \CC$ and $\nu\in \RR^+$,
		\[
		\DD_\nu(\tau_a) \approx \tau_{a'},
		\]
		where $a' = e^{-\nu^2/2} a$ {\normalfont(proof in Appendix \ref{Sec:Appendix Proofs})}.
		\label{L:Dsigma}
	\end{lemma}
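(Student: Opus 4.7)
The strategy is a direct convolution computation. Since $\DD_\nu$ adds an independent Gaussian to $z$, its action on a density is convolution with the (wrapped) Gaussian kernel $g_\nu$ on $S^1$, and $\tau_a$ is a trigonometric polynomial of degree one, so this convolution can be carried out in closed form.

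First, I would write $\tau_a$ in its complex form
\[
\tau_a(z)=\frac{1}{2\pi}\bigl(1+\bar{a}\,e^{iz}+a\,e^{-iz}\bigr),
\]
which makes it transparent that $\tau_a$ lives in the three-dimensional space spanned by the Fourier modes $n\in\{-1,0,1\}$. Then, writing
\[
\DD_\nu(\tau_a)(z)=\int g_\nu(u)\,\tau_a(z-u)\,du
\]
and expanding $\cos(z-u-\theta)=\cos(z-\theta)\cos u+\sin(z-\theta)\sin u$, the integral splits into a constant contribution plus a term proportional to $\cos(z-\theta)$ whose amplitude is multiplied by $\int g_\nu(u)\cos u\,du=e^{-\nu^2/2}$; the sine contribution vanishes because $\int g_\nu(u)\sin u\,du=0$. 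Collecting terms yields $\DD_\nu(\tau_a)(z)=\frac{1}{2\pi}\bigl(1+2e^{-\nu^2/2}|a|\cos(z-\theta)\bigr)=\tau_{a'}(z)$ with $a'=e^{-\nu^2/2}a$, as claimed.

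As a consistency check, Lemma~\ref{L:Dp} applied to $\tau_a$ combined with Lemma~\ref{L:E} forces $\EE_{\DD_\nu(\tau_a)}=e^{-\nu^2/2}a$, which pins down $a'$ as soon as one knows the output lies in the family $\{\tau_b:b\in\CC\}$. The only delicate point---and the nearest thing to an obstacle---is the circle-versus-line distinction: strictly, one should convolve against the wrapped Gaussian $\sum_{k\in\ZZ}g_\nu(\cdot+2\pi k)$ rather than against $g_\nu$ itself. But the wrapped Gaussian has Fourier coefficient $e^{-n^2\nu^2/2}$ at each mode $n\in\ZZ$, so the computation above at modes $n=\pm1$ is exact. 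I therefore expect the $\approx$ in the statement to in fact be an equality for the idealized densities $\tau_a$; the tilde is most likely there to absorb the small perturbations to which the lemma will be applied when $\DD_\nu$ is composed with $\MM_P$ along the trajectory.
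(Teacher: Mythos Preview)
Your proof is correct and follows essentially the same approach as the paper: both write $\DD_\nu$ as convolution with the Gaussian kernel, expand $\tau_a$ in terms of $\cos(z-\theta)$ (equivalently, the first Fourier modes), and reduce everything to the Gaussian characteristic function $\int G_\nu(u)e^{iu}\,du=e^{-\nu^2/2}$ (your separate $\cos$ and $\sin$ integrals are just its real and imaginary parts). Your remark that the identity is in fact exact for the idealized $\tau_a$, with the $\approx$ only absorbing later perturbations, is on point; the paper's computation is likewise an equality.
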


	
	{\bf The dynamics.}
	Here, we consider all connectivity layers $w$ and compute the dynamics of the order parameter $V$. We can calculate the order parameter at the next time step, $V^{+}$, and show that the map $V \mapsto V^{+}$ can be well approximated by the Markov process defined in Sec. \ref{Sec:The complex-square+noise Markov chain}. Since we are dealing with the dynamics of the network before the sparking of synchronization, we assume that all distributions $\rho_w$ are close to uniform. For each $w$, let $a_w = \EE_{\rho_w}$ (close to $0$ before synchronization). Notice that $a_w$  must satisfy
	\begin{equation}
		\int w a_w \delta(w) dw = V.
		\label{Eq:int_aw}
	\end{equation}
	We also assume that $\rho_w$ is of the form $\tau_{a_w}$, for all $w$, and calculate $\rho^{+}_w$ (density $\rho_w$ at the next time step) as follows:
	\[
	\rho^{+}_w = \FF_{V}(\rho_w) = \DD_{\alpha\sqrt{w/2C}}(\MM_{\alpha w V}(\tau_{a_w})) =\tau_{a'_w}
	\]
	where $a'_w = e^{-\alpha^2 w/4C} \alpha wV (\alpha wV+4a_w)/8$. Therefore, the expected value for $V^{+}$ is obtained by
	\begin{eqnarray}\label{eq: V+ formula}
		V^{+} &=& \int w \left(e^{-\alpha^2 w/4C} \alpha wV (\alpha wV+4a_w)/8 \right)\delta(w) dw     \nonumber       \\ 
		&=& \alpha^2 (V)^2 \int \left(w^3 e^{-\alpha^2 w/4C} \delta(w)/8\right) dw \nonumber \\
		&&+ \alpha V \int \left(w^2 a_w e^{-\alpha^2 w/4C} \delta(w) /2\right) dw. 
		\label{E:Vplus}
	\end{eqnarray}
	
	Based on Eq. \ref{E:Vplus}, we define $K^\alpha_\delta(C)$ as in Eq. \ref{Eq:K}.
	The term $K^\alpha_\delta(C)$ is relevant only for small values of $C$. If the mean degree $C$ is large, $e^{-w \frac{\alpha^2}{4C}} \approx 1$, and thus, $K^\alpha_\delta(C) \approx 1$. For small values of $C$, $K^\alpha_\delta(C)$ is always less than $1$; i.e., it slows down the sparking time. The reason for this lies in the fact that each map $i$ receives signals from a distinct set of $w_iC$ neighbors, resulting in some random noise in the received signals for each map. This noise has a detrimental effect on the synchronization of the system.
	
	The second term of Eq. \ref{E:Vplus} has the random variable $a_w$. We show in Appendix \ref{Sec:Appendix aw} that $\int w^2 a_w e^{-\alpha^2 w/4C} \delta(w)$ has expected value $V M_3(\delta) K^\alpha_\delta(C)/M_2(\delta)$.

	
	{\bf The variance on the order parameter.}
	After computing the distribution $\rho^{+}$, we take $N$ random points $(w_i,z_i)$. The next order parameter is then obtained as $V^{+} = \sum_{i=1}^N w_i e^{iz_i}$. The expected value of $w_i e^{iz_i}$ is $V'=(\alpha^2 M_3(\delta) K^\alpha_\delta(C) V^2/8)(1+ 4/\alpha M_2)$. As for the variance, we obtain $\EE(w_i^2)\VV(e^{iz_i})$. Since $\rho^{+}$ is very close to the uniform distribution, we can approximate the variance $\VV_{\rho^{+}}(e^{iz_i}) \approx \VV_{unif}(e^{iz_i}) = (1/2){\bf I}$, where ${\bf I}$ is the 2x2 identity matrix. Moreover,
	\begin{equation}
		\EE(w_i^2) = \int w^2\delta(w)dw = M_2(\delta)
		\label{Eq:E_wi2}
	\end{equation}
	is the second momentum of the degree distribution $\delta$. Taking the mean of $N$ points, we can use the central-limit theorem and conclude that $V^{+}$ is 2D-Gaussian distributed in the complex plane with mean $V'$ and variance $M_2(\delta)/2N$. 
	
	Therefore, by taking $\sigma = \sqrt{M_2(\delta)/(2N)}$ and $k = \alpha^2\  K^\alpha_\delta(C) \ M_3(\delta) (1+ 4/\alpha M_2(\delta))/8$ (as presented in \ref{Def:CSPN_parameters}), the expected time to the spark of synchronization can, thus, be approximated by Eq. \ref{Eq:E}, with $M^{\alpha}(\delta)$ and $K^\alpha_\delta(C)$ as defined in Eqs. \ref{Eq:M} and \ref{Eq:K}, respectively.

	\section{Discussion} \label{Sec:Discussion}
	We studied the mechanisms for the spark of synchronization in heterogeneous networks of Bernoulli coupled maps. We obtained a continuum limit approximation of the model and derived an expression whose expected value determines the sparking time of synchrony given a network design. We showed with numerical simulations in finite-sized networks that our results are consistent with the analysis. 
	
	The correlation of the expected sparking time of synchrony and the network construction [Eq. \ref{Eq:E}] shows that, for a fixed coupling strength, mean degree and degree distribution, the larger the network, the longer is the expected time the system takes to synchronize. Furthermore, the time it takes to spark is more influenced by changes in the coupling strength (powered-two correlation) rather than the number of maps in the network (squared-root correlation). One can also see that the more heterogeneous the network degree distribution [i.e., the larger $M_2(\delta)$ and $M_3(\delta)$], the sooner the system is expected to synchronize. In addition, larger mean degrees lead to faster synchronization since it implies larger values of $K^\alpha_\delta(C)$; however, the increase in $K^\alpha_\delta(C)$ is rather slow for larger values of $C$, and its variation after a certain threshold is expected to have small effects on the time to synchronization.
	
	Our results are consistent with others \cite{Pereira2017, Restrepo2006} and show that the emergence of coherence in large collections of heterogeneous coupled chaotic systems is also highly dependent on the network topology. However, while most analysis of the synchronization in coupled networks with interacting dynamical systems investigates the critical coupling strength, we looked at the problem from a new perspective. By studying the dynamics of systems in heterogeneous networks before synchrony, we derived an expression that correlates the parameters of the network structure with the expected time to synchronization. It is important to note that due to the intrinsic error inherent in finite-sized networks being modeled with normal distributions, there is always a chance that a network may lose synchronization. However, the network dynamics may spontaneously return to synchrony quickly for some identifiable configurations. This analysis could be pertinent for various applications where synchrony among a set of individuals is a desirable condition, particularly those where one can control the network design.
	
	To our knowledge, this is the first attempt to investigate the expected time for a network to reach synchrony considering heterogeneous networks without a fixed form for the degree distribution. Our simulations show that the Markov process presented in this work is a suitable tool to model the sparking of synchronization in heterogeneous networks of coupled maps. Our results, however, present a small deviation [approximately 10\%; Fig. \ref{Fig:Figure2}A] which does not affect qualitatively the conclusions derived from the analysis. Future investigations may try to understand the reasons for this deviation as well as derive a closed-form equation for the expected time to synchronization. Whether to study the dynamics of neurons \cite{Iaccarino2016, Martorell2019}, power grids \cite{Motter2013} or any other heterogeneous networks of coupled units, we argue that understanding the fundamental mechanisms for the spark of synchrony can shed light into new control strategies that enhance coherence between interacting units.

	\section{Acknowledgments}
	We thank Tiago Pereira for enlightening discussions and supporting to present our results.

	\appendix
	\section{Simulation of the expected time to the spark of synchronization} \label{Sec:Appendix Simulation}
	We present the algorithm we used for obtaining the expected time to synchronization (Alg. \ref{Alg:ES}). Its input consists of the adjacency matrix $A_{ij}$, the network coupling strength $\alpha$, the states initial condition $z^0_i$ (randomly sampled from a uniform distribution ${\mathcal U}_{[0,2\pi]}$), the maximum number of iterations $t_{max}$, and the parameter $r_c$ that determines when the spark of synchronization occurs. Because $r^t$ has different asymptotic behaviors for different network configurations, to determine the value $r_c$ for each network configuration, we first calculated the average value of $r$ after synchronization and defined $r_c$ as half of it. We use half since the value of $r$ at synchronization is not completely stable. For the simulations presented in Figs. \ref{Fig:Figure2} and \ref{Fig:Figure3}, we considered $t_{max}=100,000$.

	\begin{algorithm}
		\caption{Time to emergence of synchrony}\label{Alg:ES}
		\begin{algorithmic}[1] 
			\Require $z^0_i, \alpha, A_{ij}, r_c, t_{max}$
			\Ensure $t_s$
			\State $count \gets 0$
			\For{$t \in 0,...,t_{max}$}       
			\State $\mbox{Compute } r^t$ 
			\If{$r^t \geq r_c$} 
			\State $count \gets count+1$
			\State $z^t \gets N \mbox{ randomly sampled values from } {\mathcal U}_{[0,2 \pi]}$
			\Else
			\State $\mbox{Update } z^t$ 
			\EndIf
			\EndFor
			\State $t_s \gets t_{max}/count$
		\end{algorithmic}
	\end{algorithm}

	\section{Maps with connectivity $w$ close to the ratio $2/\alpha r^t$ are more likely to synchronize} \label{Sec:Appendix Sync}
	Recall from Sec. \ref{Sec:The model} that $V^t_i \sim {\mathcal N}^2(d_i V^t, (d_i/2) {\bf I})$. Thus, we can approximate Eq. \ref{Eq:M1} as follows:
	\begin{equation}
		z^{t+1}_i \approx 2 z^t_i + \alpha w_i \Im(V^t \bar{u}^t_i),
		\label{Eq:M2}
	\end{equation}
	with $V^t=r^t e^{i\theta^t}$. For all maps $i$, we can write $z^t_i = \theta^t + h^t_i$, with $h^t_i$ baing small for those maps that are synchronized at time $t$. Thus, by approximating Eq. \ref{Eq:M2} by its first degree [i.e., using $sin(h) \approx h$ for $h \approx 0$], we obtain $z^{t+1}_i \approx 2\theta^t + h_i(2-w_i\alpha r^t)$. The closest the term $2-w_i\alpha r^t$ is to $0$, the more likely a map is to synchronize, i.e., $z^{t+1}_i \approx 2\theta^t$. Therefore, maps with connectivity $w$ close to the ratio $2/\alpha r^t$ are more likely to synchronize.

	\section{Proofs of the Lemmas} \label{Sec:Appendix Proofs}
	Recalling Definitions \ref{Def:Mean_field} and \ref{Def:tau}, here we present the proofs of Lemmas \ref{L:Dp}, \ref{L:E}, \ref{L:Mp}, and \ref{L:Dsigma}. 
	
	\begin{lemma*} 
		$\EE_{\tau_a}=a$.
	\end{lemma*}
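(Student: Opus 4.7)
The plan is to unwind the definitions and reduce $\EE_{\tau_a}$ to a pair of standard Fourier integrals on the circle, each of which collapses by the orthogonality relation $\int_0^{2\pi} e^{inz}\,dz = 2\pi\delta_{n,0}$. Writing $a = \epsilon e^{i\theta}$ and substituting the explicit form of $\tau_a$ from Definition \ref{Def:tau} into Definition \ref{Def:Mean_field}, I would split $\EE_{\tau_a}$ as
\[
\EE_{\tau_a} \;=\; \frac{1}{2\pi}\int_0^{2\pi} e^{iz}\,dz \;+\; \frac{\epsilon}{\pi}\int_0^{2\pi} e^{iz}\cos(z-\theta)\,dz.
\]
The first integral is zero by orthogonality, so the entire mean field is carried by the cross term.

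For the cross term, I would expand $\cos(z-\theta) = \tfrac12\bigl(e^{i(z-\theta)} + e^{-i(z-\theta)}\bigr)$, so that the integrand becomes $\tfrac12 e^{i(2z-\theta)} + \tfrac12 e^{i\theta}$. The oscillating factor $e^{i(2z-\theta)}$ integrates to zero over a full period, while the $z$-independent term contributes $\tfrac12 e^{i\theta}\cdot 2\pi = \pi e^{i\theta}$. Multiplying by the prefactor $\epsilon/\pi$ leaves $\epsilon e^{i\theta} = a$, which is the claim.

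There is essentially no obstacle here: the argument is a one-line exercise in Fourier orthogonality, and the only bookkeeping to be careful about is tracking the phase $\theta$ through the cosine expansion. If one wishes to avoid even that, the computation can be packaged more cleanly via the equivalent form $\tau_a(z) = \frac{1}{2\pi}\bigl(1 + 2\Re(e^{iz}\bar a)\bigr)$ noted in the paper, writing $2\Re(w) = w + \bar w$ so that the integrand splits into three Fourier modes $e^{i\cdot 0\,z}\bar a$, $e^{2iz}\bar a$, and $e^{iz}$; only the first survives integration, giving $2\pi\bar{\bar a}/(2\pi) = a$ directly.
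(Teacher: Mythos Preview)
Your proposal is correct and follows essentially the same route as the paper: both substitute the explicit form of $\tau_a$, observe that the constant term integrates to zero, and evaluate the cosine cross term via standard orthogonality. The only cosmetic difference is that the paper makes the substitution $z' = z - \theta$ and then splits $e^{iz'}\cos(z')$ into its real and imaginary parts (computing $\int\cos^2$ and $\int\sin\cos$), whereas you expand $\cos(z-\theta)$ in complex exponentials directly; these are the same computation in different notation.
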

	\begin{proof}
		\begin{eqnarray*}
			\EE_{\tau_a} &=& \int_{-\pi}^\pi e^{iz} \tau_a(z) dz \\
			&=& \frac{1}{2\pi}\int_{-\pi}^\pi e^{iz} (1+2\epsilon \cos(z - \theta))dz             \\
			&=&  \frac{1}{2\pi}\int_{-\pi}^\pi e^{iz} dz + \frac{\epsilon}{\pi}\int_{-\pi}^\pi e^{i(z'+\theta)} \cos(z') dz'     \\
			&=& \frac{\epsilon}{\pi}e^{i\theta} \int_{-\pi}^\pi e^{i z'}\cos(z') dz'   \\
			&=& \frac{\epsilon}{\pi}e^{i\theta}\left(\int_{-\pi}^\pi \cos(z')^2 dz' + i\int_{-\pi}^\pi \sin(z') \cos(z') dz' \right)   \\
			&=&   \frac{\epsilon}{\pi}e^{i\theta}\big( \pi + 0\big)    \\
			&=&   \epsilon e^{i\theta}  = a                         \qedhere
		\end{eqnarray*}
	\end{proof}
	
	\begin{lemma*}
		Given $a \in \CC$ and $\nu\in \RR^+$,
		\[
		\DD_\nu(\tau_a) \approx \tau_{a'},
		\]
		where $a' = e^{-\nu^2/2} a$.
	\end{lemma*}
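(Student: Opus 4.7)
The natural approach is Fourier analysis on $S^1$: the operator $\DD_\nu$ is convolution with a Gaussian density, and $\tau_a$ lives in the very low-dimensional subspace spanned by the characters $\{1,\,e^{iz},\,e^{-iz}\}$, so the calculation collapses to evaluating the characteristic function of the Gaussian at frequencies $\pm 1$. As a consistency check, Lemma~\ref{L:Dp} together with Lemma~\ref{L:E} already forces the mean field of the target to be $e^{-\nu^2/2}a$; the substantive content of this lemma is that the full density, not merely its first Fourier coefficient, retains the $\tau_{a'}$ shape under diffusion.

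Concretely, I would first write $a = \epsilon e^{i\theta}$ and expand
\[
\tau_a(z) = \frac{1}{2\pi}\bigl(1 + \epsilon e^{-i\theta}e^{iz} + \epsilon e^{i\theta}e^{-iz}\bigr),
\]
so that $\tau_a$ is a linear combination of the three characters $e^{ikz}$ with $k \in \{-1,0,1\}$. Next, I would express $\DD_\nu(\tau_a)(y) = \int_{S^1} \tau_a(y-\zeta)\,g_\nu(\zeta)\,d\zeta$, where $g_\nu$ is the (wrapped) Gaussian density on $S^1$ of standard deviation $\nu$. Because convolution diagonalizes in the Fourier basis, each mode $e^{ikz}$ is multiplied by $\hat g_\nu(k) = \int e^{ik\zeta}g_\nu(\zeta)\,d\zeta = e^{-k^2\nu^2/2}$. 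The constant term ($k=0$) is preserved, and the two first harmonics are each scaled by $e^{-\nu^2/2}$, giving
\[
\DD_\nu(\tau_a)(y) = \frac{1}{2\pi}\bigl(1 + 2 e^{-\nu^2/2}\epsilon \cos(y-\theta)\bigr) = \tau_{a'},
\]
with $a' = e^{-\nu^2/2}\epsilon e^{i\theta} = e^{-\nu^2/2}a$, which is precisely the claim.

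The ``$\approx$'' rather than ``$=$'' is the only mildly delicate point, and where I expect the main bookkeeping effort to go. If one uses the genuinely wrapped Gaussian on $S^1$, the identity above is exact since $\hat g_\nu(k)=e^{-k^2\nu^2/2}$ for every integer $k$. If instead one models the additive noise by an unwrapped $\mathcal N(0,\nu^2)$ variable and then reduces mod $2\pi$, the discrepancy between the true pushforward and $\tau_{a'}$ is controlled by tails of the form $\sum_{k\neq 0}e^{-2\pi^2 k^2/\nu^2}$, exponentially small in $1/\nu^2$ and negligible in the paper's regime, where $\nu = \alpha\sqrt{w/(2C)}$ is small for the mean degrees $C$ considered. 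The hard part is therefore not algebraic but conceptual: one must check that these wrapping corrections are of smaller order than the $O(\epsilon)$ perturbation of the modes we actually track, so that the image really does lie in the $\tau_{a'}$ family up to negligible error.
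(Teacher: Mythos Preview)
Your proposal is correct and essentially identical to the paper's argument: both reduce to the characteristic-function identity $\int_{\RR} e^{i\epsilon}G_\nu(\epsilon)\,d\epsilon = e^{-\nu^2/2}$, the paper by directly integrating $\tau_a(z+\epsilon)=\frac{1}{2\pi}\bigl(1+2\epsilon\Re(e^{i(z-\theta+\epsilon)})\bigr)$ against $G_\nu$ over $\RR$, you by phrasing the same computation as convolution diagonalizing on the characters $e^{ikz}$. Your caution about the ``$\approx$'' is in fact unnecessary here: because $\tau_a$ is periodic, integrating against the unwrapped Gaussian over $\RR$ already yields $\tau_{e^{-\nu^2/2}a}$ exactly, as the paper's proof shows, so no wrapping error term ever appears.
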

	\begin{proof}
		Let $G_\nu(x)$ be the Gaussian density function $\frac{1}{\nu\sqrt{2\pi}}e^{-\frac{1}{2\nu^2}x^2}$. We can write
		\[
		\DD_\nu(\tau)(z) = \int_\RR \tau(z+\epsilon) G_\nu(\epsilon) d\epsilon.
		\]
		Then, using that $\int_\RR e^{i\epsilon} G_\nu(\epsilon) d\epsilon = e^{-\nu^2/2}$ and writing $\tau_a(z)$ as $1+2\epsilon\Re(e^{i(z-\theta)})$, we obtain
		
		\begin{eqnarray*}
			\DD_\nu(\tau_a)(z)  &=& \int_\RR \tau_a(z+\epsilon) G_\nu(\epsilon) d\epsilon     \\
			&=& \int_\RR \left(1+2\epsilon\Re\left(e^{i(z-\theta+\epsilon)}\right)\right) G_\nu(\epsilon) d\epsilon     \\
			&=& 1 + 2 \epsilon\Re\left(\int_\RR e^{i(z-\theta+\epsilon)} G_\nu(\epsilon) d\epsilon\right)     \\
			&=& 1 + 2 \epsilon\Re\left(e^{i(z-\theta)} \int_\RR e^{i\epsilon} G_\nu(\epsilon) d\epsilon\right)  \\
			&=& 1 + 2 \epsilon \Re\left(e^{i(z-\theta)}\right) e^{-\nu^2/2}      \\
			&=& \tau_{e^{-\nu^2/2}a}(z)                  \qedhere
		\end{eqnarray*}
	\end{proof}
	
	\begin{lemma*}
		Given $a \in \CC$ and $\nu\in \RR^+$ small, $\EE_{\tau_a} = a$, $\EE_{\DD_\nu(\tau_a)} = e^{-\nu^2/2} a$.
	\end{lemma*}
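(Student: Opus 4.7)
The lemma bundles together two statements, both of which follow immediately from results already established in the appendix, so the plan is to assemble the proof as a one-line composition of those earlier lemmas rather than redoing any integration.

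The first equation, $\EE_{\tau_a} = a$, is precisely the content of Lemma \ref{L:E}, proved just above by direct integration of $e^{iz}\tau_a(z)$ over $S^1$. I would simply cite that lemma; no further computation is required for this half of the statement.

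For the second equation, $\EE_{\DD_\nu(\tau_a)} = e^{-\nu^2/2} a$, the plan is to chain Lemma \ref{L:Dsigma} with the first half. By Lemma \ref{L:Dsigma}, $\DD_\nu(\tau_a) = \tau_{a'}$ with $a' = e^{-\nu^2/2}a$; crucially, inspecting the proof of that lemma shows that the displayed identity is in fact an equality, not just an approximation, because the only analytic ingredient used is the exact Gaussian integral $\int_\RR e^{i\epsilon} G_\nu(\epsilon)\,d\epsilon = e^{-\nu^2/2}$. Applying the first half of the present lemma to $\tau_{a'}$ then gives
\[
\EE_{\DD_\nu(\tau_a)} \;=\; \EE_{\tau_{a'}} \;=\; a' \;=\; e^{-\nu^2/2}\, a,
\]
which is the desired identity. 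An equivalent, slightly shorter route is to invoke Lemma \ref{L:Dp} directly: for any distribution $\tau$ one has $\EE_{\DD_\nu(\tau)} = e^{-\nu^2/2}\EE_{\tau}$, so specializing to $\tau = \tau_a$ and inserting $\EE_{\tau_a} = a$ from Lemma \ref{L:E} yields the same conclusion without passing through the $\tau_{a'}$ description.

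Because the entire argument is just a composition of lemmas already proved, I anticipate no real obstacle. The only point that deserves an explicit sentence in the write-up is the observation that the ``$\approx$'' in the statement of Lemma \ref{L:Dsigma} can, for this particular input, be upgraded to an equality, so that the resulting value of $\EE_{\DD_\nu(\tau_a)}$ comes out exactly as $e^{-\nu^2/2}a$ with no error term to control.
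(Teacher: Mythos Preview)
Your proposal is correct and matches the paper's own proof exactly: the paper simply states ``This lemma is immediate from the two previous lemmas,'' meaning Lemma~\ref{L:E} and Lemma~\ref{L:Dsigma}, which is precisely the composition you describe. Your additional remark that the $\approx$ in Lemma~\ref{L:Dsigma} is actually an equality, and your alternative route via Lemma~\ref{L:Dp}, are both valid observations that go slightly beyond what the paper bothers to spell out.
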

	This lemma is immediate from the two previous lemmas.
	
	\begin{lemma*}
		Given $a \in \CC$ and $P\in \CC$ ,
		\[
		\MM_P(\tau_a) \approx \tau_{a'},
		\]
		where $a' = P(P+4a)/8$.
	\end{lemma*}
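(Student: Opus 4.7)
The strategy is to reduce the claim $\MM_P(\tau_a)\approx \tau_{a'}$ to the identity $\EE_{\MM_P(\tau_a)} = a'$, since $\tau_{a'}$ is (by Lemma \ref{L:E}) just the simplest distribution on $S^1$ whose mean field is $a'$, and in this regime the paper treats a distribution as determined (up to higher-order corrections) by its first Fourier coefficient. So the plan is to directly compute
\[
\EE_{\MM_P(\tau_a)} \;=\; \int_{-\pi}^{\pi} e^{i\bigl(2z + \Im(Pe^{-iz})\bigr)}\, \tau_a(z)\, dz,
\]
expand the inner exponential to second order in $P$, and integrate against $\tau_a$ term by term.

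First, I would rewrite the inner phase using $\Im(w) = (w-\bar w)/(2i)$, so that $i\,\Im(Pe^{-iz}) = (Pe^{-iz} - \bar P e^{iz})/2$. Taylor expanding $\exp(i\,\Im(Pe^{-iz}))$ through $O(P^2)$ gives
\[
1 + \tfrac{1}{2}\bigl(Pe^{-iz} - \bar P e^{iz}\bigr) + \tfrac{1}{8}\bigl(P^2 e^{-2iz} - 2|P|^2 + \bar P^2 e^{2iz}\bigr) + O(P^3).
\]
Multiplying by $e^{2iz}$ converts each summand into a pure Fourier mode: $e^{2iz}$, $\tfrac{1}{2}(Pe^{iz} - \bar P e^{3iz})$, and $\tfrac{1}{8}(P^2 - 2|P|^2 e^{2iz} + \bar P^2 e^{4iz})$.

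Next, I would write $\tau_a(z) = \frac{1}{2\pi}\bigl(1 + \bar a\, e^{iz} + a\, e^{-iz}\bigr)$, so integrating against $\tau_a$ amounts to extracting the constant Fourier coefficient of each product. By orthogonality of $\{e^{inz}\}$, nearly all cross terms vanish. Only two terms survive: the $\tfrac{P}{2}e^{iz}$ piece paired with $\tfrac{a}{2\pi}e^{-iz}$ (giving $Pa/2$), and the constant $\tfrac{P^2}{8}$ piece paired with the uniform part of $\tau_a$ (giving $P^2/8$). Summing yields $\EE_{\MM_P(\tau_a)} = P^2/8 + Pa/2 = P(P+4a)/8 = a'$, as desired; all neglected contributions are $O(P^3)$, $O(P^2 a)$, or $O(Pa^2)$, which is exactly the "small $P$, small $a$" regime in which $\tau_a$ is valid.

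There is no real obstacle, only bookkeeping: one must keep \emph{both} the first- and second-order Taylor terms, because the first-order term supplies $Pa/2$ (which needs the $a$-dependent part of $\tau_a$), while the second-order term supplies $P^2/8$ (which needs only the uniform part of $\tau_a$). Dropping either truncation would lose a piece of $a' = P(P+4a)/8$. I would also briefly remark that $\MM_P(\tau_a)$ as a full density contains higher Fourier modes (e.g.\ $e^{2iz}$ and $e^{4iz}$) with coefficients of order $P$ and $P^2$, which is why the conclusion is stated as an approximation $\approx \tau_{a'}$ rather than an equality: the projection onto the first Fourier mode matches $\tau_{a'}$ to the order the paper works in.
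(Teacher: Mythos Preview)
Your computation is correct and reaches the right $a'=P(P+4a)/8$, but the route is genuinely different from the paper's. The paper works at the level of the density: it writes the pushforward via the change-of-variables formula $\MM_P(\tau)(x)=\sum_{z\in f^{-1}(x)}\tau(z)/f'(z)$, uses that $f(z)=2z+\Im(Pe^{-iz})$ is $2$-to-$1$ on $S^1$, expands the two preimages and the Jacobian to second order in $|P|$ (after rotating to make $P$ real), and then combines them with trig identities to obtain $1+\Re(e^{ix}\,\overline{2a'})$, i.e.\ the density $\tau_{a'}$ itself. You instead bypass the density entirely and compute only the first Fourier coefficient $\EE_{\MM_P(\tau_a)}=\int e^{if(z)}\tau_a(z)\,dz$ by Taylor expanding $e^{i\Im(Pe^{-iz})}$ and using orthogonality. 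Your approach is shorter and coordinate-free (no rotation trick, no preimage bookkeeping); the paper's approach buys a slightly stronger conclusion, namely that to the working order the \emph{whole} density equals $\tau_{a'}$, not just its mean field.

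On that last point, your closing remark needs a small correction. You say $\MM_P(\tau_a)$ carries higher Fourier modes of order $P$ and $P^2$; in fact the same orthogonality argument applied to $\int e^{inf(z)}\tau_a(z)\,dz$ for $n\geq 2$ shows those coefficients vanish through second order in $P$ and first order in $a$ (the lowest-degree pure exponential in $e^{4iz}e^{2i\Im(Pe^{-iz})}$ is $e^{2iz}$, which still has no constant pairing with $\tau_a$). That is exactly what the paper's density computation confirms, and it is the real reason the approximation $\MM_P(\tau_a)\approx\tau_{a'}$ is legitimate at this order rather than merely an agreement of first moments.
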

	
	\begin{proof}
		Let $a = (\epsilon/2) e^{i\delta}$, and $P = 2 m e^{i\theta}$, where the factors of $1/2$ and $2$ are used to simplify the notation below. By considering a rotation, we may assume $\theta = 0$ and $\delta= \delta-\theta$. This is evident since if $z' = z -\theta$, then $2 z + m \sin(\theta-z) = 2 z' + 2\theta + m \sin(-z')$. Thus, if we define $\tau'(z') = \tau(z'+\theta)$, then $\MM_P(\tau)(z'+2\theta) = \MM_{e^{-i\theta} P}(\tau')(z')$. We will use this to find the formula for $\theta\neq 0$ later. 
		
		We are applying the function $f(z) = 2z - 2 m\sin(z)$ to the distribution $\tau(z) = (1/2\pi)(1+\epsilon\cos(z - \delta))$. Thus, $\MM_P(\tau)(x) = \sum_{z\in f^{-1}(x)} \frac{\tau(z)}{f'(z)}$. For $m$ being small, the function $f$ is very close to $2z$, and, hence, it is 2-to-1 as a function on $S^1$. If $f(z) = x$, $z = x/2 + m\sin(z)$, which we can approximate to the first order (using $z\approx x/2$ and ignoring terms that contain $m^2$) by,  
		\[
		z = x/2 + m \sin(x/2)
		\]
		and
		\[
		2\pi\tau(z) = 1 + \epsilon \cos(x/2 + m \sin(x/2) - \delta),
		\]
		which, using the first-order Taylor expansion of  $cosine$ around $x/2-\delta$, we can approximate to the second order as
		\[
		2\pi\tau(z) = 1  + \epsilon\cos(x/2- \delta) - \epsilon m\sin(x/2) \sin(x/2 - \delta).
		\]
		
		The derivative of $f$ is given by $f'(z) = 2 - 2 m \cos(z)$, which we can approximate to  second order by 
		\[
		f'(z) \approx 2 - 2 m \cos(x/2 + m\sin(x/2)).
		\]
		Considering again the first-order Taylor expansion of $cosine$, we obtain
		\[
		f'(z)\approx 2 - 2m \cos(x/2) + 2m^2\sin^2(x/2).
		\]
		Using that the second degree approximation to the inverse of $1+am+bm^2$ is $1-am+(a^2- b) m^2$, we obtain
		\begin{eqnarray*}
			2/f'(z) &\approx& 1 + m \cos(x/2) + m^2(\cos^2(x/2)-\sin^2(x/2))            \\  
			&\approx& 1 + m \cos(x/2) + m^2\cos(x).
		\end{eqnarray*}
		
		There are two points in $f^{-1}(x)$, which we call $z_1$ and $z_2$. They correspond to the fact that $x$ and $x+2\pi$ represent the same element in $S^1$. So, we consider $z_1$ using $x$ as above and $z_2$ using $x+2\pi$. Then, considering that $\cos(x/2+\pi) = -\cos(x/2)$ and $\sin(x/2+\pi) = -\sin(x/2)$, we obtain
		\begin{eqnarray*}
			2\pi\tau(z_1) &=& 1 - \epsilon m\sin(x/2) \sin(x/2 - \delta) + \epsilon\cos(x/2- \delta)\\
			2\pi\tau(z_2) &=& 1 - \epsilon m\sin(x/2) \sin(x/2 - \delta) - \epsilon\cos(x/2- \delta) \\
			2/f'(z_1) &\approx& 1 + m^2\cos(x) + m \cos(x/2)       \\
			2/f'(z_2) &\approx& 1 + m^2\cos(x) - m \cos(x/2).
		\end{eqnarray*}
		
		Then, with the fact that $(A+B)(C+D)+(A-B)(C-D)= 2AC+2BD$, we can obtain
		\begin{widetext}
			\begin{eqnarray*}
				2\pi\MM_P(\tau)(x) &=&   2\pi \tau(z_1)/f'(z_1) + 2\pi \tau(z_2)/f'(z_2) \\
				&\approx&
				(1 - \epsilon m\sin(x/2) \sin(x/2 - \delta))(1 + m^2\cos(x)) + \epsilon m \cos(x/2- \delta)  \cos(x/2)         \\
				&\approx& 1 - \epsilon m\sin(x/2) \sin(x/2 - \delta) + m^2\cos(x) + \epsilon m\cos(x/2- \delta)\cos(x/2)          \\
				&=& 1 + m^2\cos(x) + \epsilon m\big(\cos(x/2- \delta)\cos(x/2) - \sin(x/2) \sin(x/2 - \delta)\big) \\
				&=& 1 + m^2\cos(x) + \epsilon m\cos(x-\delta) \\
				&=& 1 + \Re(e^{ix} (m^2 +  \epsilon me^{-i\delta}))          \\
				&=& 2\pi\tau_{b'}(x),
			\end{eqnarray*}
		\end{widetext}
		where $b' = (m^2 +  \epsilon me^{-i\delta})/2$.
		
		Now we remove the assumption that $\theta=0$. Recall that $\MM_P(\tau)(z'+2\theta) = \MM_{e^{-i\theta} P}(\tau')(z')$. We, thus, need to move our result by $2\theta$, i.e.,
		\begin{eqnarray*}
			2\pi\MM_P(\tau)(x)  &=&    1+\Re(e^{i(x-2\theta)}(m^2+\epsilon m e^{-i(\delta+\theta)}))   \\
			&=&            1 +\Re(e^{ix} \times m e^{-i\theta} \times (m e^{-i\theta} + \epsilon e^{-i \delta}))     \\
			&=&     2\pi      \tau_{a'}
		\end{eqnarray*}
		where $a' = (P/2)(P/2 + 2a)/2 = (P(P+4a))/8$.
	\end{proof}

	\section{The term $a_w$} \label{Sec:Appendix aw}
	In Sec. \ref{Sec:The main result}, the expression for the expected value of $V^+$ [Eq. \ref{eq: V+ formula}] presents the term
	\begin{equation}\label{eq: 2nd term}
		\int w^2 a_w e^{-\alpha^2 w/4C} \delta(w),
	\end{equation}
	which contains the randomly distributed values $a_w$. Since $a_w$ is the mean field of $\rho_w$, which is close to uniform before synchronization, we assume that, for each $w$, $a_w$ is normally distributed around $0$ with 2D-variance $1/(2 N\delta(w)dw)$. Notice that $N\delta(w)dw$ is the number of maps with connectivity between $w$ and $w+dw$. Notice also that
	\begin{equation}\label{eq: V=wawdelta}
		\int w a_w \delta(w) dw = V.
	\end{equation}
	Now, for each $w$, consider a collection of a random variable $Z_w$ in the complex plane, normally distributed with mean 0 and variance 1. Under these circumstances, we can assume that $a_w$s are described by $Z_w/\sqrt{2N\delta(w)}$. Hereupon, consider $a$ and $Z$ as random variables in $\RR^\RR$, with  $a= T(Z)$, where $T$ is the map from $\RR^\RR\to\RR^\RR$ given by $T(g)(w) = g(w)/\sqrt{2N\delta(w)}$. Let $C,D\in\RR^\RR$ be the maps,
	\begin{itemize}
		\item $C(w) = w\delta(w)$,
		\item $D(w)=w^2 e^{-w/4C} \delta(w).$
	\end{itemize}
	The motivation to introduce these maps is to write Eq. \ref{eq: 2nd term} as
	\[
	\langle D, a\rangle,
	\]
	where the inner product is defined as $\langle g,f\rangle=\int f(w)g(w)dw$. This formulation also allows us to rewrite Eq. \ref{eq: V=wawdelta} as,
	\[
	\langle C, a\rangle = V.
	\]
	Notice that $\langle f, T(g)\rangle = \langle T(f), g\rangle$ for all $f,g\in\RR^\RR$.
	Therefore, we can procedure as follows: We want to obtain the expected value of $\langle T(D), Z\rangle$, given that $\langle T(C), Z\rangle = V$. To this end, we decompose $T(D)$ into orthogonal functions as
	\[
	T(D)=\beta T(C) + E
	\]
	where $T(C)$ and $E$ are orthogonal (i.e., $\langle T(C),E\rangle=0)$. By doing so, we obtain
	\[
	\langle D, a\rangle = \langle T(D), Z\rangle = \beta V + \langle E,Z\rangle.
	\]
	Thus, $\langle D, a\rangle$ has an expected value $\beta V$ and a variance $||E||^2$.
	We calculate $\beta$ and $E$ as follows:
	\[
	\beta = \frac{\langle T(C), T(D)\rangle}{||T(C)||^2}
	\mbox{,}\quad
	E = T(D) - \beta T(C).
	\]
	The numerator and denominator of $\beta$, respectively, are then obtained as
	\begin{eqnarray*}
		\langle T(C), T(D)\rangle &=& \int \frac{w\delta(w)}{\sqrt{2N\delta(w)}}\frac{w^2e^{-\alpha^2 w/4C}\delta(w)}{\sqrt{2N\delta(w)}}dw      \\
		&=& \frac{1}{2N}\int w^3e^{-\alpha^2 w/4C}\delta(w)dw      \\
		&=& \frac{1}{2N} M_3(\delta)\ K^\alpha_\delta(C),
	\end{eqnarray*}
	\begin{eqnarray*}
		\langle T(C), T(C)\rangle &=& \int \frac{w\delta(w)}{\sqrt{2N\delta(w)}}\frac{w\delta(w)}{\sqrt{2N\delta(w)}}dw      \\
		&=& \frac{1}{2N}\int w^2\delta(w)dw      \\
		&=& \frac{1}{2N} M_2.
	\end{eqnarray*}
	
	We conclude that $\int w^2 a_w e^{-\alpha^2 w/4C} \delta(w)$ has mean $V M_3(\delta)K^\alpha_\delta(C)/M_2(\delta)$, as needed in the derivation of Sec. \ref{Sec:The continuous approximation}. As for $||E||^2$, we obtain
	\[
	\langle E, E\rangle = \frac{1}{2N} \int \big(w^2e^{-\alpha^2 w/4C} -w(K^\alpha_\delta(C)/M_2)\big)^2 \delta(w) dw.
	\]
	This value is rather negligible in all examples we considered in this work. For all 81 settings considered in Fig. \ref{Fig:Figure2}, the mean value for $\alpha ||E||$ was 0.03 with maximum 0.07. Putting the results above with Eq. \ref{eq: V+ formula}, we obtain that the randomness of $a_w$s adds 2D-noise with standard deviation $||V||\alpha ||E||$. Therefore, it is only a small percentage of the size of $||V||$ that is being added as noise. The effect is probably not completely negligible, and more investigation is suggested for future works in this area.

	\bibliography{references}
	
\end{document}